\documentclass[sigconf]{acmart}
\AtBeginDocument{%
  }
\setcopyright{acmcopyright}
\copyrightyear{2025}
\acmYear{2025}

\acmConference[]{}{}{}
\acmBooktitle{}
\acmPrice{}
\acmISBN{}

\usepackage{amsthm,amsmath}
\usepackage{dsfont}
\usepackage{tikz}
\usepackage{todonotes}
\usepackage{hyperref}
\usepackage{nicefrac}
\usepackage{enumitem}
\usepackage{hyperref}
\usepackage[linesnumbered,lined,boxed]{algorithm2e}

\floatstyle{ruled}
\newfloat{algo}{tp}{lop}
\floatname{algo}{Algorithm}
\usepackage[noend]{algpseudocode}
\algrenewcommand\textproc{\sffamily}
\algrenewcommand\algorithmicindent{1em}
\newtheorem{theorem}{Theorem}[section]
\newtheorem{prop}[theorem]{Proposition}

\newtheorem{corollary}[theorem]{Corollary}
\newtheorem{lemma}[theorem]{Lemma}
\newtheorem{remark}[theorem]{Remark}

\newtheorem{defi}[theorem]{Definition}
\newtheorem{example}[theorem]{Example}

\newtheorem{hypo}[theorem]{Hypothesis}
\newtheorem{notation}[theorem]{Notation}

\def\eatspace#1{#1}
\def\step#1#2{\par\kern1pt\hangindent#2em\hangafter=1\noindent\rlap{\small#1}\kern#2em\relax\eatspace}

\def\<#1>{\langle#1\rangle}

\def\ud{\mathrm{d}}
\def\Fbb{\mathbb{F}}
\def\Dcal{\mathcal{D}}

\def\nicefrac#1#2{#1/#2}

\def\ord{\operatorname{ord}}

\begin{document}

\title{$\mathrm{LCM}$ decomposition of linear differential operators in positive
characteristic}
\thanks{%
  R.\ Pages was supported by the Austrian FWF grants 10.55776/PAT8258123. 
}

\author{Raphaël Pagès}
\email{raphael.pages@jku.at}
\orcid{0009-0001-9840-0680}
\affiliation{%
  \institution{Institute for Algebra}
  \institution{Johannes Kepler University}
  \state{}
  \city{4040 Linz}
  \country{Austria}
}

\renewcommand{\shortauthors}{R. Pagès}

\begin{abstract}
  We present an algorithm to compute $\mathrm{LCLM}$-decompositions for
  linear differentials operators with coefficients in the rational function
  field of characteristic $p$, $\mathbb{F}_{p^n}(t)$. We show that for an
  operator $L$ of order $r$ with coefficients of degree $d$, it finishes
  in polynomial time in $r$, $d$ and $p$. This algorithm proceeds in three
  steps. We begin by showing that the ``type'' of the factorisation of $L$ can be easily obtained from the
  Frobenius normal form of its $p$-curvature, which can be efficiently computed
  using~\cite{BoCaSc16}. Using results from the thesis of the
  author~\cite{PagPhD}, we are then able to construct an operator $L^*$
  in the same equivalence class as $L$ for which an
  $\mathrm{LCLM}$-decomposition is known. Finally, by computing an isomorphism between the
  quotient modules
  $\Fbb_q(t)\langle\partial\rangle/\Fbb_q(t)\langle\partial\rangle L^*$ and
  $\Fbb_q(t)\langle\partial\rangle/\Fbb_q(t)\langle\partial\rangle L$, we
  find a corresponding $\mathrm{LCLM}$-decomposition of $L$.
\end{abstract}

\begin{CCSXML}
  <ccs2012>
  <concept>
  <concept_id>10010147.10010148.10010149.10010150</concept_id>
  <concept_desc>Computing methodologies~Algebraic algorithms</concept_desc>
  <concept_significance>500</concept_significance>
  </concept>
  </ccs2012>
\end{CCSXML}

\ccsdesc[500]{Computing methodologies~Algebraic algorithms}

\keywords{Linear Differential Operators, Factorisation, LCM-decomposition,
positive characteristic, $p$-curvature, Ore Algebra}

\maketitle
\section{Introduction}
Studying and solving differential equations has been an important subject
on mathematicians' mind since the invention of differential calculus and
has found many applications. Although the classical study of differential
equations concerns essentially functions of real or complex variables,
there is an algebraic counterpart to this theory, which makes sense over a
large variety of base fields, including, by not limited to, number fields,
$p$-adic fields and fields of positive characteristic. Applications include
points counting on elliptic curves~\cite{Lauder04count}, isogeny
computations~\cite{LaVa16, Eid21}, and, more generally, the study of (the
cohomology of) many arithmetic varieties. This paper focuses on linear
differential equations of the form $L(y)=0$ with
$L=f_r(t)\partial^r+f_{r-1}(t)\partial^{r-1}+\dots+f_0(t)$
and the $f_i(t)$ being rational functions over a finite field $\Fbb_q$,
where $q$ is a power of the characteristic $p$. In this setting,
differential operators of the same form of $L$ can be multiplied with one
another, but the resulting ring structure is noncommutative, as the
multiplication follows the rule $\partial t=t\partial+1$. In this context,
the question of factorisation is not only an interesting algorithmic
challenge, it also serves an important purpose, as the space of solutions
of a right factor of $L$ is a subspace of the space of solutions of $L$.
Unfortunately, unlike the commutative case of polynomials, if the operator
$L$ can be written as a product $L=L'R$, the left factor $L'$ does not
automatically give us information on the solutions of $L$. For this reason,
it is oftentimes more interesting to try to write $L$ as a common left
multiple of some operators $L_1,\dots,L_k$ where the $L_i$ are as simple as
possible, and such that the solution space of $L$ is the direct sum of the
solution spaces of the $L_i$. This is what is called an
$\mathrm{LCLM}$-decomposition of $L$, where $\mathrm{LCLM}$ stands for
\emph{least common left multiple}.\\
In characteristic $0$, linear differential equations are quite well
understood and efficient algorithms for answering all sorts of questions
about them or their solutions are available~\cite{kauers23c}. Regarding
factorisation algorithms in the case of operators with coefficients in
$\mathbb{C}(t)$, several algorithms have been proposed
(\emph{e.g.}~\cite{Gri90,VHo97,vdH07:galois,vdH07:reshol,ChGoMe22})  Those algorithms usually rely
on tools such as the monodromy group which do not have an obvious analogue
in positive characteristic. In characteristic $p$, the first significant
work towards a factorisation algorithm is the classification of finite
dimensional differential modules published by Marius van der
Put~\cite{Put95}. Theorem~\ref{thm_decompclass} and Corollary~\ref{cor_critiso} can
be seen as computational versions of this classification in the case of
differential operators. The question of factorisation in positive
characteristic was since then studied several times~\cite{Put96}, notably
by Thomas Cluzeau~\cite{Cluzeau03} who proposed a factorisation algorithm
for differential systems relying on the same principle as Theorem~\ref{thm_firstdecomp} 
applied to other randomly chosen elements of the Eigenring. Unfortunately,
unlike in characteristic $0$, elements of the Eigenring have their
characteristic polynomial with coefficients in $\Fbb_q(t^p)$ (and not
$\Fbb_q$) over which most polynomials do not split which probably explains
why this method does not work in practice. Other proposed methods
included~\cite{GiZh03} who relied on being able to find zero divisors in
the Eigenring. Unfortunately the proposed algorithm relied on hypotheses
which are not usually satisfied for operators for which
Theorem~\ref{thm_firstdecomp} does not yield a complete factorisation, as
was shown later in~\cite{GoTLoNa19}. To our knowledge, the most reliable method of
factorisation relies on a method proposed by van der Put
to solve equations of the form $\frac{\ud^{p-1}}{\ud t^{p-1}}f+f^p=a^p$, of unknown variable
$f$, in some
separable extension of $\Fbb_q(t)$. This method relied on the operator
being factored to not be of the form $N(t^p,\partial^p)$ with
$N\in\Fbb_q(X)[Y]$ and usually yielded solutions of size linear in $p$.
In their thesis~\cite{PagPhD}, the author of the paper developed an
algorithm to find solutions of this equation of size independent from $p$,
without relying on such an hypothesis. We build on this work and propose
a new method to compute $\mathrm{LCLM}$-decompositions of operators under
sufficient separability hypotheses.\\
After reestablishing some basic properties of differential operators and
$\mathrm{LCLM}$s in section~\ref{sec_prelim}, we show in section~\ref{sec_pcurv} how the $p$-curvature of an operator can
be used to fully determine the shape of an $\mathrm{LCLM}$-decomposition of
$L$ as well as its equivalence class. Section~\ref{sec_buildrepr} show how
to use the results of~\cite{PagPhD} to construct a ``small'' operator whose
decomposition is known, in the same equivalence class as $L$. Finally we
present our $\mathrm{LCLM}$-decomposition algorithm in
section~\ref{sec_algo}. The bound on the size of the output of our
algorithm is at the moment quadratic in $p$ due to having to compute a
set of differential homomorphisms between two cyclic differential modules.
Experiments have shown that this bound can probably be lowered to
be quasilinear in $p$ without modifying the algorithm itself.

\section{Preliminaries}\label{sec_prelim}
In this section, we recall the basic notions around differential operators
as well as $\mathrm{LCM}$, or rather $\mathrm{LCLM}$, decompositions. Let
$p$ be a prime number and $q=p^n$. We consider linear differential
operators with coefficients in the rational function field
$\mathbb{F}_q(t)$ where $\mathbb{F}_q$ is the finite field with $q$
elements. The ring of linear differential operators over $\mathbb{F}_q(t)$
is denoted $\mathbb{F}_q(t)\langle\partial\rangle$, and its elements are
polynomials in $\partial$ of the form
\[f_r\partial^r+f_{r-1}\partial^{r-1}+\dots+f_1\partial+f_0\]
where $f_i\in\mathbb{F}_q(t)$. The (noncommutative) multiplication in
this ring is derived from the Leibniz rule $\partial f=f\partial+\frac{\ud
}{\ud t}f$ for all $f\in\mathbb{F}_q(t)$. For any operator
$L=f_r\partial^r+\dots+f_0\in \mathbb{F}_q(t)\langle\partial\rangle\backslash\{0\}$, the
order of $L$, which we denote by $\ord(L)$, is the largest $i\in\mathbb{N}$ for
which $f_i\neq 0$. We recall that the order provides
$\mathbb{F}_q(t)\langle\partial\rangle$ with a right (and a left) Euclidean
division which makes $\mathbb{F}_q(t)\langle\partial\rangle$ a principal
left (and right) ideal domain. Two operators $L_1$ and $L_2$ generate the
same left ideal of $\mathbb{F}_q(t)\langle\partial\rangle$ if and only if there
exists $u\in\mathbb{F}_q(t)^\times$ such that $L_1=uL_2$. Furthermore this
allows us to define notion of \emph{least common left multiples}
($\mathrm{LCLM}$) and \emph{greatest common right divisors}
($\mathrm{GCRD}$).
\begin{defi}
  Let $L_1,L_2,L\in\mathbb{F}_q(t)\langle\partial\rangle$. We say that:
  \begin{enumerate}[label=\roman*)]
    \item $L=\mathrm{LCLM}(L_1,L_2)$ if and only if
      $\mathbb{F}_q(t)\langle\partial\rangle L=\mathbb{F}_q(t)\langle\partial\rangle
      L_1\cap\mathbb{F}_q(t)\langle\partial\rangle L_2$.
    \item $L=\mathrm{GCRD}(L_1,L_2)$ if and only if
      $\mathbb{F}_q(t)\langle\partial\rangle L=\mathbb{F}_q(t)\langle\partial\rangle
      L_1+\mathbb{F}_q(t)\langle\partial\rangle L_2$.
  \end{enumerate}
\end{defi}
\begin{notation}
  Let $L\in\mathbb{F}_q(t)\langle\partial\rangle$. We denote by
  $\mathcal{D}_L$ the left quotient module
  $\nicefrac{\Fbb_q(t)\langle\partial\rangle}{\Fbb_q(t)\langle\partial\rangle
  L}$.\\
  In some instances, we may want to extend the coefficient field to
  some separable extension $K$ of $\Fbb_q(t)$. When no conflict of notation
  is possible we will write $\Dcal_L$ for
  $\nicefrac{K\langle\partial\rangle}{K\langle\partial\rangle L}$ all the
  same.
\end{notation}
We state without proof the following result which describes in more details the
relations between $\mathrm{LCLM}$s and $\mathrm{GCRD}$s.
\begin{proposition}\label{short_exact_sequence}
  The following sequence of left 
  $\mathbb{F}_q(t)\langle\partial\rangle$-modules is exact:
  \[\begin{array}{rcl}
    0\rightarrow\Dcal_{\mathrm{LCLM}(L_1,L_2)}\rightarrow\Dcal_{L_1}\oplus\Dcal_{L_2}&\rightarrow&
    \Dcal_{\mathrm{GCRD}(L_1,L_2)}\rightarrow 0\\
    (M,N)&\mapsto&M-N
  \end{array}.\]
  In particular:
  \begin{enumerate}[label=\roman*)]
    \item $\ord(\mathrm{LCLM}(L_1,L_2))+\ord(\mathrm{GCRD}(L_1,L_2))=\ord
      L_1+\ord L_2$.
    \item If $\mathrm{GCRD}(L_1,L_2)=1$ then
      $\Dcal_{\mathrm{LCLM}(L_1,L_2)}\simeq \Dcal_{L_1}\oplus\Dcal_{L_2}$.
  \end{enumerate}
\end{proposition}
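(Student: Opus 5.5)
We write $A=\Fbb_q(t)\langle\partial\rangle$, $L=\mathrm{LCLM}(L_1,L_2)$ and $G=\mathrm{GCRD}(L_1,L_2)$, so that $AL=AL_1\cap AL_2$ and $AG=AL_1+AL_2$. The approach is the standard module-theoretic one, modelled on the Chinese remainder sequence for ideals of a ring. We first write the exact sequence at the level of $A$ itself, then pass to quotients, and finally read off i) and ii) by counting dimensions over $\Fbb_q(t)$.

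The key step is the sequence of left $A$-modules
\[0\rightarrow A/(AL_1\cap AL_2)\rightarrow A/AL_1\oplus A/AL_2\rightarrow A/(AL_1+AL_2)\rightarrow 0,\]
with first map $M\mapsto (M,M)$ and second map $(M,N)\mapsto M-N$. The statement lists $(M,N)\mapsto M-N$ as the second arrow, and the first arrow is implicitly the diagonal one. I check exactness in four steps.
\begin{enumerate}[label=\alph*)]
  \item Both maps are well defined, since $AL\subseteq AL_i$ and $AL_i\subseteq AG$. They are $A$-linear because all the ideals involved are left ideals.
  \item The diagonal map is injective: $M$ lies in its kernel iff $M\in AL_1\cap AL_2=AL$.
  \item The composite of the two maps is $M-M=0$. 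Conversely, if $M-N\in AL_1+AL_2$, write $M-N=PL_1+QL_2$ and set $X=M-PL_1=N+QL_2$. Then $X\equiv M \bmod AL_1$ and $X\equiv N\bmod AL_2$, so $(M,N)$ is the image of $X$.
  \item Surjectivity of the second map is clear, since $(M,0)\mapsto M$.
\end{enumerate}
Since $AL$ and $AG$ are exactly the intersection and the sum by definition, these quotients are $\Dcal_L$ and $\Dcal_G$.

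For i), I use that $\Dcal_P$ is an $\Fbb_q(t)$-vector space with basis $1,\partial,\dots,\partial^{\ord P-1}$. This follows from right Euclidean division by $P$ and the fact that $AP$ contains no nonzero element of order below $\ord P$. The maps above are in particular $\Fbb_q(t)$-linear, so additivity of dimension along a short exact sequence gives $\ord L+\ord G=\ord L_1+\ord L_2$. For ii), if $G=1$ then $\Dcal_G=0$, and the injective map $\Dcal_L\to\Dcal_{L_1}\oplus\Dcal_{L_2}$ is an isomorphism.

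There is no real obstacle here. The only points requiring care are the left-versus-right conventions and the dimension statement for $\Dcal_P$.
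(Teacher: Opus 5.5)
The paper states this proposition without proof, so there is no argument of the author's to compare with. Your proof is the standard one and it is correct: the exact sequence $0\to A/(I\cap J)\to A/I\oplus A/J\to A/(I+J)\to 0$ holds for any two left ideals $I,J$ of any ring, and a dimension count then gives i) and ii).

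One point is worth making explicit. Applying the basis description of $\Dcal_P$ to $P=\mathrm{LCLM}(L_1,L_2)$ requires $AL_1\cap AL_2\neq 0$ (for nonzero $L_1,L_2$), where $A=\Fbb_q(t)\langle\partial\rangle$. This follows from your step b): $\Dcal_{\mathrm{LCLM}(L_1,L_2)}$ embeds in a space of dimension $\ord L_1+\ord L_2$, whereas $A$ itself is infinite-dimensional over $\Fbb_q(t)$.
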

\begin{defi}
  We say that $L\in\Fbb_q(t)\langle\partial\rangle$ is decomposable if and
  only if there exists
  $L_1,L_2\in\Fbb_q(t)\langle\partial\rangle\backslash\Fbb_q(t)$ coprime
  (meaning that $\mathrm{GCRD}(L_1,L_2)=1$) and $L=\mathrm{LCLM}(L_1,L_2)$.
  This is equivalent to saying that $\Dcal_L$ is decomposable as a
  left $\Fbb_q(t)\langle\partial\rangle$-module (ie can be written as a
  direct sum
  of nontrivial submodules).\\
  We say that $L$ is indecomposable if it is not decomposable.
\end{defi}
\begin{remark} An irreducible operator is indecomposable but the converse
  is not true. For example the operator $\partial
  x\partial=x\partial^2+\partial$ is indecomposable but not irreducible.
\end{remark}
\begin{defi}
  Let $L\in\Fbb_q(t)\langle\partial\rangle$ and
  $(L_1,\dots,L_k)\in\Fbb_q(t)\langle\partial\rangle^k$ for some
  $k\in\mathbb{N}$. We say that $(L_1,\dots,L_k)$ is an
  $\mathrm{LCLM}$-decomposition of $L$ if and only if each $L_i$ is
  indecomposable, $L=\mathrm{LCLM}_{i=1}^k L_i$ and $\ord L=\sum_{i=1}^k
  \ord L_i$.
\end{defi}
The goal of this paper is to present an algorithm to find, given some
$L\in\Fbb_q(t)\langle\partial\rangle$, an
$\mathrm{LCLM}$-decomposition of it.
\begin{proposition}\label{prop_lclm_directsum}
  Let $L\in\Fbb_q(t)\langle\partial\rangle$ and $L_1,\dots,L_k$ be right
  divisors of $L$. $L=\mathrm{LCLM}_{i=1}^k L_i$ with $\ord
  L=\sum_{i=1}^k\ord L_i$ if and only if the canonical homomorphism
  $\Dcal_L\rightarrow \bigoplus_{i=1}^k \Dcal_{L_i}$ is an isomorphism.
\end{proposition}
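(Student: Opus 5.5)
The plan is to reduce everything to a kernel computation plus a dimension count. Write $A=\Fbb_q(t)\langle\partial\rangle$. Since each $L_i$ right-divides $L$, we have $AL\subseteq AL_i$. Hence the canonical map
\[\varphi:\Dcal_L\rightarrow\bigoplus_{i=1}^k\Dcal_{L_i},\qquad M+AL\mapsto (M+AL_1,\dots,M+AL_k)\]
is a well-defined homomorphism of left $A$-modules. Its kernel is $\big(\bigcap_{i=1}^k AL_i\big)/AL$. Set $L'=\mathrm{LCLM}_{i=1}^kL_i$; this exists because $A$ is a principal left ideal domain, and then $\bigcap_i AL_i=AL'$. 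Therefore $\varphi$ is injective if and only if $AL'=AL$, that is, if and only if $L=uL'$ with $u\in\Fbb_q(t)^\times$. Since the LCLM is only defined up to such a unit, this is exactly the condition $L=\mathrm{LCLM}_{i=1}^kL_i$.

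For the dimension count, recall that $\Dcal_N$ is an $\Fbb_q(t)$-vector space of dimension $\ord N$, with basis $1,\partial,\dots,\partial^{\ord N-1}$ given by right Euclidean division. The map $\varphi$ is $\Fbb_q(t)$-linear between spaces of dimensions $\ord L$ and $\sum_i\ord L_i$.

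For the forward direction, assume $L=\mathrm{LCLM}_iL_i$ and $\ord L=\sum_i\ord L_i$. Then $\varphi$ is injective, and an injective linear map between finite-dimensional spaces of equal dimension is bijective. So $\varphi$ is an isomorphism.

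For the converse, assume $\varphi$ is an isomorphism. Injectivity gives $L=\mathrm{LCLM}_iL_i$ by the kernel computation above. Bijectivity gives equality of dimensions over $\Fbb_q(t)$, hence $\ord L=\sum_i\ord L_i$.

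The only delicate point is that the kernel is $\bigcap_i AL_i/AL$ and not something involving pairwise GCRDs. This follows directly from the definition of $\varphi$, so no induction using Proposition~\ref{short_exact_sequence} is needed. The remaining care is in treating the LCLM up to units of $\Fbb_q(t)^\times$.
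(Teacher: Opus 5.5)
Your proof is correct. For one of the two implications it takes a genuinely different route from the paper.

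\textbf{Isomorphism $\Rightarrow$ LCLM with additive orders.} Here you and the paper argue the same way. The paper factors the canonical map as $\Dcal_L\twoheadrightarrow\Dcal_{L'}\rightarrow\bigoplus_i\Dcal_{L_i}$ with $L'=\mathrm{LCLM}_iL_i$. This is your observation that the kernel is $\Fbb_q(t)\langle\partial\rangle L'/\Fbb_q(t)\langle\partial\rangle L$, and the order equality is then read off the dimensions.

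\textbf{LCLM with additive orders $\Rightarrow$ isomorphism.} Here the routes differ. The paper does not compute the kernel; it inducts on $k$. It settles $k=2$ with the exact sequence of Proposition~\ref{short_exact_sequence}, where additivity of orders forces the $\mathrm{GCRD}$ to be $1$. It then uses $\ord\mathrm{LCLM}_{i<k}L_i\le\sum_{i<k}\ord L_i$ and $\ord L\le\ord\mathrm{LCLM}_{i<k}L_i+\ord L_k$ to force additivity on the partial $\mathrm{LCLM}$, so the induction hypothesis applies. You instead note that your kernel description already gives injectivity as soon as $L=\mathrm{LCLM}_iL_i$. The order hypothesis is then exactly the dimension equality that upgrades an injective $\Fbb_q(t)$-linear map to a bijection.

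\textbf{Comparison.} Your argument is shorter and self-contained: it does not rely on Proposition~\ref{short_exact_sequence}, which the paper states without proof. It also separates the two conditions cleanly. $L=\mathrm{LCLM}_iL_i$ is equivalent to injectivity alone, and given that, additivity of orders is equivalent to surjectivity. The paper's inductive route fits the exact sequence it already has. It also shows along the way that each partial $\mathrm{LCLM}$ is coprime to the next factor, which is the form in which the result is invoked later (e.g.\ in the proof of Proposition~\ref{prop_buildrepr}). That form is also easy to recover from your version.
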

\begin{proof}
  Let us first assume that $\Dcal_L\rightarrow \bigoplus_{i=1}^k
  \Dcal_{L_i}$ is an isomorphism. By definition of $L':=\mathrm{LCLM}_{i=1}^k L_i$, 
  it is a right divisor of $L$ and this
  canonical morphism factors into
  $\Dcal_L\twoheadrightarrow\Dcal_{L'}\rightarrow
  \bigoplus_{i=1}^k\Dcal_{L_i}$. Thus it can only be an isomorphism if both arrows are, and thus if
  $L=L'$. We get the equality on the orders from dimensional analysis. Let
  us now assume that $L=\mathrm{LCLM}_{i=1}^k L_i$ and $\ord L=\sum_{i=1}^k
  \ord L_i$.
  From Proposition~\ref{short_exact_sequence}, we know the result to be
  true for $k=2$. If $k\geq 3$ then we can set
  $L'=\mathrm{LCLM}_{i=1}^{k-1}L_i$. We then have
$L=\mathrm{LCLM}(L',L_k)$. From Proposition~\ref{short_exact_sequence}(ii)
  we know that $\ord L'\leq \sum_{i=1}^{k-1}\ord L_i$ and $\ord L\leq \ord
  L'+\ord L_k\leq \sum_{i=1}^k\ord L_i$. But by hypothesis, those must
  in fact be equalities so we have $\ord L'=\sum_{i=1}^{k-1}\ord L_i$. By
  induction we deduce that $\Dcal_{L'}\simeq \bigoplus_{i=1}^{k-1}
  \Dcal_{L_i}$ and $\Dcal_L\simeq\Dcal_{L'}\oplus\Dcal_{L_k}\simeq
  \bigoplus_{i=1}^{k}\Dcal_{L_i}$.
\end{proof}
\section{$p$-curvature}\label{sec_pcurv}
  In this section we recall some result used in~\cite{Put95}
  and~\cite{Cluzeau03} to find a first decomposition of a given operator
  from its $p$-curvature, and more precisely, the characteristic polynomial
  of its $p$-curvature. An efficient algorithm to compute the
  characteristic polynomial of the $p$-curvature of a given operator in
  polynomial time in the order of the operator and the degree of its
  coefficients and quasilinear time in $\sqrt{p}$ was presented
  in~\cite{BoCaSc14}. We then show in Theorem~\ref{thm_decompclass} that
  the shape of a $\mathrm{LCLM}$-decomposition of an operator can be
  entirely deduced from the Frobenius (or rational) normal
  form of its $p$-curvature (see \emph{e.g.}~\cite[section~21.4]{BJN94} for
  reminders about Frobenius normal forms of linear endomorphisms). To
  compute this normal form we use~\cite{BoCaSc16} which achieves a similar complexity. 
  \begin{defi}\cite[p.329]{PuSi03}
    Let $L\in\Fbb_q(t)\langle\partial\rangle$. The $p$-curvature of $L$ is
    the $\Fbb_q(t)$-linear endomorphism of $\Dcal_L$ defined by the
    multiplication by $\partial^p$. We denote this endomorphism by
    $\psi^L_p$.
  \end{defi}
  It is a known fact that for any $L\in\Fbb_q(t)\langle\partial\rangle$,
  there exists a $\Fbb_q(t)$-basis of $\Dcal_L$ in which the matrix of
  $\psi_p^L$ has coefficient in $\Fbb_q(t^p)$~\cite[Proposition
  2.1.2]{Dw90}. In particular, its characteristic polynomial and
  Frobenius normal form have coefficients in $\Fbb_q(t^p)$. We need to know two
  additional crucial informations about the characteristic polynomial of
  the $p$-curvature before we can state the first decomposition theorem.
  \begin{defi}
    We say that an operator $L\in\Fbb_q(t)\langle\partial\rangle$ is
    central if it commutes multiplicatively with all the elements of
    $\Fbb_q(t)\langle\partial\rangle$. Equivalently,
    $L\in\Fbb_q(t^p)[\partial^p]$~\cite[Lemma~1.1]{Put95}.
  \end{defi}
  \begin{lemma}\cite[Lemma~3.9]{BoCaSc14}\label{lm_multrednorm}
    The map that to $L\in\Fbb_q(t)\langle\partial\rangle$ associates the
    characteristic polynomial of its $p$-curvature
    $\chi(\psi^L_p)\in\Fbb_q(t^p)[Y]$ is multiplicative. Furthermore, if
    $L$ is central and monic then $\chi(\psi^L_p)(\partial^p)=L^p$.
  \end{lemma}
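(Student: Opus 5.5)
We have to prove two claims: that $L\mapsto\chi(\psi_p^L)$ is multiplicative, and that $\chi(\psi_p^L)(\partial^p)=L^p$ for central monic $L$.

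\textbf{Multiplicativity.} Let $L=L_1L_2$. Right multiplication by $L_2$ gives an injective left-module map $\Dcal_{L_1}\to\Dcal_L$, $M\mapsto ML_2$. Its image is $\Fbb_q(t)\langle\partial\rangle L_2/\Fbb_q(t)\langle\partial\rangle L$. The cokernel is $\Dcal_{L_2}$, so we get the exact sequence
\[0\rightarrow\Dcal_{L_1}\rightarrow\Dcal_L\rightarrow\Dcal_{L_2}\rightarrow 0.\]
All maps are left $\Fbb_q(t)\langle\partial\rangle$-linear, so they commute with left multiplication by $\partial^p$, that is, with the $p$-curvatures. Choose an $\Fbb_q(t)$-basis of $\Dcal_L$ extending a basis of the image of $\Dcal_{L_1}$. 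In this basis the matrix of $\psi_p^L$ is block upper triangular, with diagonal blocks the matrices of $\psi_p^{L_1}$ and $\psi_p^{L_2}$. Taking determinants of $Y-\psi$ gives $\chi(\psi_p^L)=\chi(\psi_p^{L_1})\chi(\psi_p^{L_2})$. This identity holds in $\Fbb_q(t)[Y]$, and all three polynomials lie in $\Fbb_q(t^p)[Y]$ by the fact recalled above.

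\textbf{Central monic case.} Let $L=N(\partial^p)$ with $N\in\Fbb_q(t^p)[Y]$ monic of degree $m$, so $\ord L=pm$. Since $L$ is central, $\Fbb_q(t)\langle\partial\rangle L$ is a two-sided ideal. Moreover $\Dcal_L$ is a free module over $C:=\Fbb_q(t^p)[Y]/(N)$, where $Y$ acts as $\partial^p$ and $C$ is central.

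Here is the plan for computing $\chi(\psi_p^L)$.

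\begin{enumerate}[label=\roman*)]
  \item Write $\Dcal_L=\bigoplus_{j=0}^{p-1}\Fbb_q(t)\otimes_{\Fbb_q(t^p)}C\,\partial^j$. As an $\Fbb_q(t)$-vector space, $\Dcal_L$ has basis $t^0\partial^{j+pk}$ up to scalars, namely the $\partial^{j}(\partial^p)^k$ with $0\le j<p$ and $0\le k<m$.
  \item In this basis, multiplication by $\partial^p$ maps $\partial^j(\partial^p)^k$ to $\partial^j(\partial^p)^{k+1}$. When $k+1=m$, we reduce using $(\partial^p)^m\equiv-\sum_{i<m}a_i(t^p)(\partial^p)^i$, where $N=Y^m+\sum_{i<m} a_iY^i$; the coefficients $a_i$ commute with $\partial^j$ because they lie in $\Fbb_q(t^p)$.
  \item The matrix of $\psi_p^L$ is therefore block diagonal, with $p$ copies of the companion matrix of $N$.
\end{enumerate}

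It follows that $\chi(\psi_p^L)=N^p$. Since $N$ is a polynomial in the central element $\partial^p$, we get $\chi(\psi_p^L)(\partial^p)=N(\partial^p)^p=L^p$.

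The main obstacle is step i): checking that these elements really form an $\Fbb_q(t)$-basis of $\Dcal_L$ and that the coefficient reduction commutes with $\partial^j$. Both rely on the identity $\partial\, a(t^p)=a(t^p)\partial$, which holds in characteristic $p$ because $\frac{\ud}{\ud t}a(t^p)=0$. Once that is in place the rest is bookkeeping.
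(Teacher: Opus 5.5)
Your argument is correct. Since the paper offers no proof of its own here (the lemma is imported from \cite[Lemma~3.9]{BoCaSc14}), it is a genuinely independent route. In the cited source the statement comes from identifying $\chi(\psi^L_p)(\partial^p)$, for monic $L$, with the reduced norm of $L$ in $\Fbb_q(t)\langle\partial\rangle$, viewed as an algebra of rank $p^2$ over its centre $\Fbb_q(t^p)[\partial^p]$; the lemma's label reflects this. Multiplicativity and the identity $\mathrm{Nrd}(c)=c^p$ for central $c$ are then general facts about reduced norms. That approach is more conceptual and is the form exploited algorithmically in the cited work, but it rests on a nontrivial identification theorem.

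Yours uses nothing beyond the commutation rules $\partial^p f=f\partial^p$ for $f\in\Fbb_q(t)$ and $\partial\,a(t^p)=a(t^p)\,\partial$. The exact sequence $0\to\Dcal_{L_1}\to\Dcal_{L_1L_2}\to\Dcal_{L_2}\to 0$ of left modules gives the block-triangular form. The companion-block computation gives $\chi(\psi^{N(\partial^p)}_p)=N^p$ directly, which is exactly the form invoked in the proof of Theorem~\ref{thm_firstdecomp}.

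The step you single out as the main obstacle is immediate. The elements $\partial^j(\partial^p)^k$ with $0\le j<p$, $0\le k<m$ are just $1,\partial,\dots,\partial^{pm-1}$, the standard basis of $\Dcal_L$ for a monic operator of order $pm$. The reduction in step ii) only needs $\partial^jL\in\Fbb_q(t)\langle\partial\rangle L$, which is automatic for a left ideal, and $a_i(t^p)\partial^j=\partial^ja_i(t^p)$. The two-sided ideal and the $C$-module structure you mention play no role.
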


\begin{theorem}\label{thm_firstdecomp}
  Let $L\in\Fbb_q(t)\langle\partial\rangle$. Let
  $N_1,\dots,N_n\in\Fbb_q(t^p)[X]$ be pairwise distincts irreducible monic
  polynomials and $\nu_1,\dots,\nu_n\in\mathbb{N}$ be
  such that $\chi(\psi_p^L)=N_1^{\nu_1}\dots N_n^{\nu_n}$. We set
  $L_i:=\mathrm{GCRD}(L,N_i^{\nu_i}(\partial^p))$ for all $i\in\{1,\dots,n\}$.
  Then:
  \begin{enumerate}[label=\roman*)]
    \item $L=\mathrm{LCLM}_{i=1}^n L_i$.
    \item $\ord L=\sum_{i=1}^n\ord L_i$.
    \item For all $i\in\{ 1,\dots,n\}$,
      $\chi(\psi_p^{L_i})=N_i^{\nu_i}$.
  \end{enumerate}
\end{theorem}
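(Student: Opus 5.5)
The plan is to work entirely at the level of the module $\Dcal_L$ with its $p$-curvature $\psi:=\psi_p^L$. The first step is to see why $N_i^{\nu_i}(\partial^p)$ makes sense as an operator acting on $\Dcal_L$. Since the $N_i$ have coefficients in $\Fbb_q(t^p)$, the operators $N_i^{\nu_i}(\partial^p)$ lie in $\Fbb_q(t^p)[\partial^p]$ and are therefore central. Hence right multiplication by $N_i^{\nu_i}(\partial^p)$ is a well-defined endomorphism of the left module $\Dcal_L$, and it coincides with left multiplication by it. In terms of the $\Fbb_q(t)$-linear endomorphism $\psi$, this endomorphism is $N_i^{\nu_i}(\psi)$, where coefficients in $\Fbb_q(t^p)$ are constants for the derivation, so they commute with $\partial$.

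The key steps are then, in order, as follows.

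First, apply the Cayley--Hamilton theorem and the kernel (primary) decomposition lemma to $\psi$, viewed as an $\Fbb_q(t)$-linear endomorphism of $\Dcal_L$. The factors $N_i^{\nu_i}$ are pairwise coprime in $\Fbb_q(t^p)[X]$, hence also in $\Fbb_q(t)[X]$, and their product annihilates $\psi$. This gives
\[\Dcal_L=\bigoplus_i \ker N_i^{\nu_i}(\psi),\]
with each summand of $\Fbb_q(t)$-dimension $\deg(N_i^{\nu_i})$. Each summand is an $\Fbb_q(t)\langle\partial\rangle$-submodule, because $N_i^{\nu_i}(\partial^p)$ is central.

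Second, identify these summands with the quotient modules $\Dcal_{L_i}$. By Bézout in the principal ideal domain, write $L_i=AL+BN_i^{\nu_i}(\partial^p)$. Centrality gives
\[\Dcal L+\Dcal N_i^{\nu_i}(\partial^p)=\Dcal L+N_i^{\nu_i}(\partial^p)\Dcal,\]
where $\Dcal$ abbreviates $\Fbb_q(t)\langle\partial\rangle$. Consequently
\[\Dcal_{L_i}\simeq \Dcal_L/N_i^{\nu_i}(\psi)\Dcal_L\simeq\operatorname{coker}N_i^{\nu_i}(\psi).\]
From the primary decomposition, this cokernel is isomorphic to $\bigoplus_{j\ne i}\ker N_j^{\nu_j}(\psi)$, since $N_i^{\nu_i}(\psi)$ is nilpotent on its own summand and invertible on the others. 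That looks off by a complement, and the mismatch suggests using instead the complementary product $M_i:=\prod_{j\neq i}N_j^{\nu_j}$. Under the definition as stated, however, the relevant map is $\Dcal_L\to\Dcal_{L_i}$, which is surjective with kernel $N_i^{\nu_i}(\psi)\Dcal_L=\bigoplus_{j\neq i}\ker N_j^{\nu_j}(\psi)$. Hence $\Dcal_{L_i}\simeq\ker N_i^{\nu_i}(\psi)$, the $i$-th primary component, so the formulation is consistent. The resulting isomorphism is compatible with $\partial^p$, which proves (iii): the $p$-curvature of $L_i$ is $\psi$ restricted to the $N_i$-primary part, whose characteristic polynomial is $N_i^{\nu_i}$.

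Third, the canonical map $\Dcal_L\to\bigoplus_i\Dcal_{L_i}$ is the projection onto the primary components, hence an isomorphism. Proposition~\ref{prop_lclm_directsum} then gives (i) and (ii) at once.

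The main obstacle is the second step: justifying carefully that $\Dcal_L/N_i^{\nu_i}(\partial^p)\Dcal_L$ really is $\Dcal_{L_i}$. This requires that the left ideal generated by $L$ and the central element $N_i^{\nu_i}(\partial^p)$ equals $\Dcal L_i$, which is the definition of the GCRD, and that the image of right multiplication by $N_i^{\nu_i}(\partial^p)$ on the quotient matches $N_i^{\nu_i}(\psi)\Dcal_L$. It also requires checking that the dimension count $\ord L_i=\deg N_i^{\nu_i}$ follows. A secondary point is that the primary decomposition is performed over $\Fbb_q(t)$ while the $N_i$ have coefficients in $\Fbb_q(t^p)$; only coprimality of the $N_i^{\nu_i}$, which is preserved under field extension, is needed there, not irreducibility over $\Fbb_q(t)$.
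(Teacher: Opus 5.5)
Your proof is correct. It reaches $\Dcal_{L_i}\simeq\ker N_i^{\nu_i}(\psi)$, with $\psi=\psi_p^L$, by a more direct route than the paper.

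\textbf{The paper's route.} The paper also starts from the primary decomposition $\Dcal_L=\bigoplus_i V_i$ with $V_i=\ker N_i^{\nu_i}(\psi)$, but it knows $V_i$ only abstractly, as some quotient $\Dcal_{L_i'}$ of $\Dcal_L$. It shows that $L_i'$ right-divides $N_i^{\nu_i}(\partial^p)$, hence $L_i$, and writes $L_i=Q_iL_i'$. It then rules out $\ord Q_i>0$ using three ingredients: the multiplicativity of $L\mapsto\chi(\psi_p^L)$ (Lemma~\ref{lm_multrednorm}), the identity $\chi(\psi_p^{N_i^{\nu_i}(\partial^p)})=N_i^{p\nu_i}$, and a comparison of $N_i$-adic valuations in $\chi(\psi_p^L)$.

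\textbf{Your route.} You compute $\Dcal_{L_i}$ exactly. Centrality of $N_i^{\nu_i}(\partial^p)$ turns the GCRD into $\Dcal_{L_i}=\Dcal_L/N_i^{\nu_i}(\partial^p)\Dcal_L=\operatorname{coker}N_i^{\nu_i}(\psi)$. The primary decomposition then gives both the isomorphism with $V_i$ and the description of the canonical map $\Dcal_L\to\bigoplus_i\Dcal_{L_i}$ as the projection onto the primary components. From there both proofs finish with Proposition~\ref{prop_lclm_directsum}.

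\textbf{What each buys.} Your version needs nothing beyond the elementary centrality of $\Fbb_q(t^p)[\partial^p]$ and linear algebra. It never uses multiplicativity of $\chi$. It uses only that the $N_i^{\nu_i}$ are pairwise coprime with coefficients in $\Fbb_q(t^p)$, so it proves the statement for any such coprime factorisation of $\chi(\psi_p^L)$. The paper's route instead exercises the $\chi$-multiplicativity argument that it reuses later, for instance in Lemma~\ref{lem_centralelim}.

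\textbf{One passage to fix.} Your sentence claiming $\operatorname{coker}N_i^{\nu_i}(\psi)\simeq\bigoplus_{j\neq i}\ker N_j^{\nu_j}(\psi)$ is false. That sum is the \emph{image} of $N_i^{\nu_i}(\psi)$, because this map is zero (not merely nilpotent) on $V_i$ and invertible on each $V_j$ with $j\neq i$. Hence the cokernel is $V_i$, exactly as your next sentence concludes. There is no mismatch and no reason to pass to the complementary product $M_i$, so delete that detour.
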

\begin{proof}
  As per the kernel decomposition lemma, we can write
  $\Dcal_L=\ker N_1^{\nu_1}(\psi_p^L)\oplus\dots\oplus
  \ker N_n^{\nu_n}(\psi_p^L)$. Let $i\in\{1,\dots,n\}$. $V_i:=\ker
  N_i^{\nu_i}(\psi_p^L)$ is isomorphic to a quotient module of $\Dcal_L$
  and is thus isomorphic to some $\Dcal_{L_i'}$ (as a
  $\Fbb_q(t)\langle\partial\rangle$-module) where $L_i'$ is a right
  divisor of $L$. Furthermore since $\psi^L_{p|V_i}=\psi^{L_i'}_p$ through
  this isomorphism and by definition of $V_i$,
  $N_i^{\nu_i}(\psi^{L_i'}_p)(1)=0=N_i^{\nu_i}(\partial^p)\mod L_i'$, it
  follows that $L_i'$ is also a divisor of $N_i^{\nu_i}(\partial^p)$, thus
  it is a divisor of $L_i$. Finally we know that
  $\chi(\psi^{L_i'}_p)=\chi(\psi^L_{p|V_i})=N_i^{\nu_i}$. Let us write
  $L_i=Q_iL_i'$ for some $Q_i\in\Fbb_q(t)\langle\partial\rangle$. Since by
  definition $L_i$ is a divisor of $L$, by Lemma~\ref{lm_multrednorm},
  $\chi(\psi^{L_i}_p)|\chi(\psi^L_p)$. Furthermore
  $\chi(\psi^{L_i}_p)=\chi(\psi^{Q_i}_p)\chi(\psi^{L_i'}_p)=\chi(\psi^{Q_i}_p)N_i^{\nu_i}$.
  But since $Q_i$ is a divisor of $L_i$ it has to be a divisor of
  $N_i^{\nu_i}(\partial^p)$. Again by multiplicativity and because
  $\chi(\psi^{N_i^{\nu_i}(\partial^p)}_p)=N_i^{p\nu_i}$, it follows that
  $\chi(\psi^{Q_i}_p)$ has to be a power of $N_i$. If $Q_i\notin\Fbb_q(t)$
  then this means that this power is nonzero and $\chi(\psi^L_p)$ has to be
  divided by a higher power of $N_i$ which is not possible. Thus we can
  assume that $L_i'=L_i$. The result now follows from
  Proposition~\ref{prop_lclm_directsum}
\end{proof}
From now on, all the operators we consider will follow the following
hypothesis, even if not explicitly stated:
\begin{hypo}\label{hypo_sep}
  Let $L\in\Fbb_q(t)\langle\partial\rangle$. We assume that
  $\chi(\psi^L_p)$ has no inseparable irreducible factor over
  $\Fbb_q(t^p)$.
\end{hypo}
From Theorem~\ref{thm_firstdecomp} it follows that we can now assume that
we are working with an operator $L$ which is a divisor of some
$N(\partial^p)^{\nu}$ where $N\in\Fbb_q(t^p)[Y]$ is irreducible and
$\nu\in\mathbb{N}$. In the case where $\nu=1$ this makes $\Dcal_L$ a
left $\Dcal_{N(\partial^p)}$-module. We can show that
$\Dcal_{N(\partial^p)}$ a central simple
$C_N:=\nicefrac{\Fbb_q(t^p)[Y]}{N(Y)}$-algebra of dimension
$p^2$~\cite[Lemma~1.2]{Put95}. From
Wedderburn's theorem~\cite[Theorem~2.1.3]{GiSz06} we deduce that $\Dcal_{N(\partial^p)}$
is either a division algebra or isomorphic to $M_p(C_N)$. In the first case
it follows that $N(\partial^p)$ is irreducible, for any nontrivial divisor would be a zero divisor 
in $\Dcal_{N(\partial^p)}$. In the latter case, we know that all simple
$M_p(C_N)$-modules are isomorphic to $C_N^p$, from which follows that
divisors of $N(\partial^p)$ are irreducible if and only if they are of
order $\deg(N)$.\\
This dichotomy, which serves us well when looking for irreducible right
factors of $L$ is still very helpful as we now demonstrate.\par
\vspace{1mm}
From now on and until page 6, we assume $L$, $N$ and $\nu$ fixed. We also assume $N$ to be
separable over $\Fbb_q(t^p)$, which is to say that $L$ satisfies
Hypothesis~\ref{hypo_sep}.
\begin{notation}
  We denote $C_N\simeq \nicefrac{\Fbb_q(t^p)[Y]}{N(Y)}$ as previously stated.
  We also denote by $y_N$ the image of $Y$ in $C_N$, so that $N(y_N)=0$.\\
  We denote $K_N=\Fbb_q(t)[y_N]$. It is a differential extension of
  $\Fbb_q(t)$ since $N$ is separable over $\Fbb_q(t^p)$.
\end{notation}
\begin{remark}Note that $K_N$ is not the same as $C_N$ as it is the
  extension of $\Fbb_q(t)$, and not $\Fbb_q(t^p)$, generated by a root of
  $N$.\par
  \vspace{1mm}
  From a computational standpoint, $N$ is
  better computationally represented by a polynomial $N_*$ over $\Fbb_q(t)$ such that
  $N_*^p(Y)=N(Y^p)$, $N$ has coefficients in $\Fbb_q(t^p)$. With these
  notations $N_*(y_N^{1/p})=0$ and we use the representation
  $K_N=\Fbb_q(t)[y_N^{1/p}]\simeq\Fbb_q(t)[Y]/N_*(Y)$.
\end{remark}
\begin{lemma}
  Let $F\in K_N[\partial^p]$. Then
  $(\partial-F)^p=\partial^p-F^{(p-1)}-F^p$ where the derivative onto $F$
  is applied coefficient-wise.
\end{lemma}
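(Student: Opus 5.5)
This is Jacobson's formula in characteristic $p$, applied to the two elements $\partial$ and $-F$ of the ring $K_N\langle\partial\rangle$. Since $F\in K_N[\partial^p]$ and $\partial^p$ is central in $K_N\langle\partial\rangle$, every coefficient of $F$ behaves like a scalar with respect to $\partial^p$. Concretely, $[\partial,F]=F'$, where the derivative is taken coefficient-wise on the powers of $\partial^p$. The plan is therefore:

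\begin{enumerate}[label=\arabic*)]
\item Reduce to a commutative subring.
\item Expand $(\partial-F)^p$ using Jacobson's formula.
\item Identify the correction term as $-F^{(p-1)}$.
\end{enumerate}

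First, I will note that $K_N[\partial^p]$ is a commutative subring stable under $\mathrm{ad}_\partial\colon G\mapsto \partial G-G\partial=G'$. Commutativity holds because $\partial^p$ is central in $K_N\langle\partial\rangle$ (indeed $\partial^p f=f\partial^p+f^{(p)}$ and $f^{(p)}=0$ for $f\in K_N$). Stability holds because $K_N$ is a differential extension of $\Fbb_q(t)$, which the Notation guarantees by separability of $N$.

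Next, I will recall Jacobson's identity for $a,b$ in an associative $\Fbb_p$-algebra:
\[(a+b)^p=a^p+b^p+\sum_{i=1}^{p-1}s_i(a,b),\]
where the $s_i$ are Lie polynomials in $a,b$. When $b$ lies in a commutative subalgebra stable under $\mathrm{ad}_a$, every iterated commutator containing $b$ at least twice vanishes, except those of the form $\mathrm{ad}_a^{p-1}(b)$. This reduces the formula to
\[(a+b)^p=a^p+b^p+\mathrm{ad}_a^{p-1}(b).\]
Rather than quoting this specialization, I would prove it directly. I will show by induction on $k$ that
\[(\partial-F)^k=\sum_{j}\binom{k}{j}\,\partial^{j}\cdot P_{k-j}(F),\]
where $P_m(F)$ is the differential polynomial satisfying $P_0=1$ and $P_{m+1}=P_m'-FP_m$. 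This is the standard ``$\partial-F$ acts like conjugating $\partial$ by an exponential'' computation, and everything stays inside the commutative ring $K_N[\partial^p]$. At $k=p$, all binomial coefficients with $0<j<p$ vanish, which leaves
\[(\partial-F)^p=\partial^p+P_p(F).\]

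The main obstacle is then the classical identity $P_p(F)=-F^{(p-1)}-F^p$ in characteristic $p$. This is the same computation that shows the $p$-curvature of $\partial-f$ is $-(f^{(p-1)}+f^p)$, a fact usually attributed to Jacobson or Katz, and since $K_N[\partial^p]$ is a commutative differential ring it carries over verbatim. To prove it I would avoid Lie-polynomial bookkeeping. Instead, I would work in the universal differential polynomial ring $\Fbb_p\{u\}$ and show that $P_p(u)+u^p+u^{(p-1)}=0$ there, then specialize $u\mapsto F$ (the specialization map is a differential ring homomorphism). For the universal identity, the idea is to compute in characteristic $0$ and reduce mod $p$. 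Over $\mathbb{Z}$, $P_m(u)$ equals $e^{\int u}\,(e^{-\int u})^{(m)}$ up to the sign convention. Expanding with Faà di Bruno, every monomial other than $u^p$ and $u^{(p-1)}$ carries a coefficient divisible by $p$; this is the only real calculation. If it becomes messy, I would fall back on the Lie-theoretic argument: for fixed $b$, the map $b\mapsto(a+b)^p-a^p-b^p$ is additive modulo terms quadratic in $b$, and those terms vanish here by commutativity.
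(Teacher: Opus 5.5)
Your proof is correct in substance and rests on the same observation as the paper's. Because $\partial^p$ is central in $K_N\langle\partial\rangle$, the coefficients of $F$ live in a commutative differential ring in which $\partial^p$ is a constant, so the classical characteristic-$p$ identity $(\partial-f)^p=\partial^p-f^{(p-1)}-f^p$ applies.

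The two routes differ only in execution. The paper adjoins a formal constant $T$, quotes van der Put's Lemma~1.4.2 for operators over $K_N[T]$ with the coefficient-wise derivation, and then substitutes $T=\partial^p$. That substitution is a ring homomorphism precisely because $\partial^p$ is central. You instead stay inside $K_N\langle\partial\rangle$ and work in the commutative $\mathrm{ad}_\partial$-stable subring $K_N[\partial^p]$. You then re-prove the classical identity yourself, via the binomial expansion and the congruence $B_p(x_1,\dots,x_p)\equiv x_1^p+x_p \pmod p$ for complete Bell polynomials, which is what your Fa\`a di Bruno remark amounts to. Your version is self-contained and shows the identity holds in any commutative differential ring of characteristic $p$, with no auxiliary variable or evaluation map. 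The paper's version is a two-line reduction that leaves the combinatorics to the citation.

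One slip needs fixing. With your recursion $P_0=1$, $P_{m+1}=P_m'-FP_m$, the correct expansion has the coefficients on the left:
\[(\partial-F)^k=\sum_{j}\binom{k}{j}P_{k-j}(F)\,\partial^j.\]
This follows by left-multiplying by $\partial-F$ and using $\partial P=P\partial+P'$ for $P\in K_N[\partial^p]$. As you wrote it, with $\partial^j$ on the left, the case $k=2$ already gives $\partial^2-2F\partial+F^2-3F'$ instead of the correct $\partial^2-2F\partial+F^2-F'$, so that induction cannot go through. At $k=p$ only the terms $j=0$ and $j=p$ survive in either ordering. So your conclusion $(\partial-F)^p=\partial^p+P_p(F)$ and the rest of the argument are unaffected once the ordering is corrected.
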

\begin{proof}
  We apply~\cite[Lemma~1.4.2(1)]{Put95} to operators with coefficients in
  $K_N[T]$ provided with the coefficient-wise extension of $K_N$'s
  derivation. We then have that for any $F\in
  K_N[T]\langle\partial\rangle$,
  $(\partial-F)^p=\partial^p-\tau(F)=\partial^p-F^{(p-1)}-F^p$ using
  \cite[Lemma~1.4.2]{Put95}'s notation. We then evaluate the equality in
  $T=\partial^p$.
\end{proof}
\begin{theorem}\label{thm_isoextmorita}
  $\Dcal_{N(\partial^p)^\nu}\simeq\Dcal_{N(\partial^p)}[T]/T^\nu$.
\end{theorem}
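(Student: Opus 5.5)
We want an isomorphism of rings (presumably $\Fbb_q(t)$-algebras, or at least of left $\Fbb_q(t)\langle\partial\rangle$-modules compatible with multiplication) between $A_\nu:=\Dcal_{N(\partial^p)^\nu}$ and $A_1[T]/T^\nu$. Note that $A_\nu$ is itself a ring, because $N(\partial^p)^\nu$ is central. Its centre contains $Z_\nu:=\Fbb_q(t^p)[Y]/N(Y)^\nu$ via $Y\mapsto\partial^p$. The plan is a Hensel/Wedderburn–Malcev type lifting argument in two steps.

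First, we lift the centre. Since $N$ is separable over $\Fbb_q(t^p)$, the ring $Z_\nu$ is a complete local algebra with residue field $C_N$. By Hensel's lemma there is a unique root $\tilde y$ of $N$ in $Z_\nu$ lifting $y_N$. Concretely, we start from $y=\partial^p$ and apply the Newton iteration $y\mapsto y-N(y)/N'(y)$, which is legitimate because $N'(y)$ is invertible modulo $N(y)$. This gives an embedding $C_N\hookrightarrow Z_\nu$, and hence $Z_\nu\simeq C_N[T]/T^\nu$ with $T:=N(\partial^p)$. Indeed, $Z_\nu$ is a free $C_N$-module with basis $1,T,\dots,T^{\nu-1}$, because $N(\tilde y+s)$ has $s$ with unit coefficient $N'(\tilde y)$.

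Second, we lift $A_1$ inside $A_\nu$. The ring $A_\nu$ is generated over $Z_\nu$ by $t$ and $\partial$. We need elements $\tilde t,\tilde\partial\in A_\nu$ satisfying the defining relations of $A_1$ over $C_N$, namely $[\tilde\partial,\tilde t]=1$, $\tilde t^p=t^p$ central, and $\tilde\partial^p=\tilde y$. We take $\tilde t=t$. We then look for $\tilde\partial=\partial-F$ with $F$ central, in $Z_\nu$ or its image. Such an $F$ is expressed as a polynomial in $\partial^p$ with coefficients in $\Fbb_q(t)$, and in fact with $p$-th power coefficients, since $F$ is central. The commutator condition $[\partial-F,t]=1$ then holds automatically. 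By the preceding Lemma, $(\partial-F)^p=\partial^p-F^{(p-1)}-F^p$, and here $F^{(p-1)}=0$ because the coefficients are in $\Fbb_q(t^p)$. So the remaining condition is $\partial^p-F^p=\tilde y$.

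This last condition is the main obstacle, since it requires extracting a $p$-th root of $\partial^p-\tilde y$, which lies in the maximal ideal $(T)$ of $Z_\nu$. In characteristic $p$ this is delicate, and I would expect to need the larger ring $A_\nu$ rather than only central $F$. Two things to try:

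\begin{enumerate}[label=\roman*)]
  \item If the central approach fails, allow $F\in K_N[\partial^p]$, as the Lemma suggests. In that case we would solve $F^{(p-1)}+F^p=\partial^p-\tilde y$ order by order in $T$, using van der Put's equation $f^{(p-1)}+f^p=a$ at each step.
  \item Alternatively, avoid explicit generators. Use Theorem~\ref{thm_firstdecomp}'s dichotomy: $A_1$ is central simple over $C_N$. Then apply the Wedderburn–Malcev principal theorem, which needs $A_\nu/\mathrm{rad}=A_1$ separable over $C_N$, as central simple algebras are. This gives a subalgebra $S\subset A_\nu$ with $S\simeq A_1$. The centralizer theorem then gives $A_\nu\simeq S\otimes_{C_N}Z_\nu\simeq A_1[T]/T^\nu$.
\end{enumerate}

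For route ii), we would check that the multiplication map $S\otimes_{C_N}Z_\nu\to A_\nu$ is an isomorphism. It is surjective by a Nakayama argument modulo $T$, and then bijective by counting $\Fbb_q(t^p)$-dimensions: both sides have dimension $p^2\nu\deg N$. I would pursue route ii) as the main line, since it sidesteps the explicit $p$-th root problem.
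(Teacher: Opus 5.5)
Your main line, route ii), is correct, and it takes a genuinely different path from the paper.

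\textbf{The paper's route.} It first uses $\Dcal_{N(\partial^p)^\nu}\simeq K_N\langle\partial\rangle/(\partial^p-y_N)^\nu$ to reduce to $N=Y-y_N$; this plays the role of your Hensel lift of the centre. It then builds the copy of $\Dcal_{N(\partial^p)}$ explicitly. It sends $\partial$ to $\partial+t^{p-1}F$ (written $x^{p-1}F$ there) with $F$ central, and finds $F$ by the iteration $F\mapsto F+PN(\partial^p)^m$. The factor $t^{p-1}$ is exactly what removes the $p$-th root obstruction you hit with central $F$. Since $(t^{p-1})^{(p-1)}=(p-1)!=-1$ in $\Fbb_p$, the preceding lemma gives $(\partial+t^{p-1}F)^p=\partial^p-F+t^{p(p-1)}F^p$. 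So a correction $PN(\partial^p)^m$ enters linearly with unit coefficient, and its $p$-th power only appears at order $pm$. Injectivity then follows from simplicity of $\Dcal_{N(\partial^p)}$, plus a lowest-order-term argument once $T\mapsto N(\partial^p)$ is added, and a dimension count concludes.

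\textbf{Your route i).} This is in fact the paper's proof, and your worry there is unfounded. No instance of van der Put's nonlinear equation is needed: $F_0=0$ already works modulo $T$. Each later step only asks for $G$ with $G^{(p-1)}=e$ for a central $e$, and $G=-t^{p-1}e$ solves it.

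\textbf{Comparison.} Route ii) trades this computation for the Wedderburn--Malcev principal theorem over $C_N$, followed by your Nakayama and dimension argument. It is shorter, more structural, and avoids the base change to $K_N$. It does rest on heavier machinery (vanishing of second Hochschild cohomology for separable algebras) and produces no explicit isomorphism. The paper's construction, by contrast, is elementary and effective, in keeping with its algorithmic aims.

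\textbf{Points to state explicitly in ii).}
\begin{enumerate}[label=\roman*)]
  \item $\mathrm{rad}(A_\nu)=N(\partial^p)A_\nu$, since it is a nilpotent two-sided ideal with simple quotient.
  \item $A_\nu$ is viewed as a $C_N$-algebra through $y_N\mapsto\tilde y$.
  \item This structure induces the central $C_N$-structure on $A_1$, because $\tilde y\equiv\partial^p\bmod N(\partial^p)$.
  \item Central simplicity of $A_1$ is van der Put's Lemma~1.2, as recalled after Hypothesis~\ref{hypo_sep}; it does not come from Theorem~\ref{thm_firstdecomp}.
\end{enumerate}
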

\begin{proof}
  Note that the canonical injection
  $\Fbb_q(t)\langle\partial\rangle\hookrightarrow
  K_N\langle\partial\rangle$ induces a ring isomorphism $\Dcal_{N(\partial^p)^\nu}\xrightarrow{\sim}
  K_N\langle\partial\rangle/(\partial^p-y_N)^\nu.$ The injectivity can be
  seen by writing operators in $\Fbb_q(t)\langle\partial\rangle$ in the
  $\Fbb_q(t^p)[\partial^p]$ basis $(x^i\partial^j)_{0\leq i,j<p}$ and using
  the fact that $N$ is the minimal polynomial of $y_N$ over $\Fbb_q(t^p)$.
  The surjectivity comes from dimensional analysis. Thus by extending the
  base field we may assume with no loss of generality that $N=Y-y_N$. We construct a ring
  homomorphism from $\Dcal_{N(\partial^p)}$ in $\Dcal_{N(\partial^p)^\nu}$
  that will map $K_N$ onto itself. Such a morphism is uniquely given by the
  image of $\partial$ which is an element $A\in K_N\langle\partial\rangle$
  such that $A^p-y_N\equiv 0\mod N(\partial^p)^\nu$. We seek $A$ of the form
  $\partial+x^{p-1}F$ with $F\in C_N[\partial^p]$. Let us assume that we have
  found $F_0\in K_N[\partial^p]$ such that $(\partial+x^{p-1}F_0)^p-y_N\equiv 0\mod
  N(\partial^p)^m$ for some $m\geq 1$. Then there exists $P\in
  C_N[\partial^p]$ such that $(\partial+x^{p-1}F_0)^p-y_N=PN(\partial^p)^m$. We
  set $F_1=F_0+PN(\partial^p)^m$. Then
  \begin{align*}
    (\partial+x^{p-1}F_1)^p&-y_N=\partial^p-F_1+x^{p(p-1)}F_1-y_N\\
    =&\partial^p-F_0-PN(\partial^p)^m+x^{p(p-1)}F_0\\&+x^{p(p-1)}P^pN(\partial^p)^{pm}-y_N\\
    \equiv&(\partial+x^{p-1}F_0)^p-y_N-PN(\partial^p)^m\mod
    N(\partial^p)^{m+1}\\
    \equiv& 0\mod N(\partial^p)^{m+1}
  \end{align*}
  Starting with $F_0=0$ and $m=1$ we can construct $F$ such
  that $N(\partial+x^{p-1}F)^{p})\equiv 0\mod N(\partial)^\nu$ and thus a
  nonzero morphism $\mu:\Dcal_{N(\partial^p)}\rightarrow\Dcal_{N(\partial^p)^\nu}$. Since
  $\Dcal_{N(\partial^p)}$ is a central simple $C_N$-algebra, it has no
  nontrivial bilateral ideal thus the morphism is injective. We map $T$
  to $N(\partial^p)$ to get the morphism
  $\Dcal_{N(\partial^p)}[T]/T^\nu\rightarrow\Dcal_{N(\partial^p)^\nu}$.
  This morphism is also injective. Indeed if $A=\sum_{k=0}^{\nu-1}A_k T^k$
  with $A_K\in\Dcal_{N(\partial^p)}$ is mapped to zero, then
  $\sum_{k=0}^{\nu-1}\mu(A_k)N(\partial^p)^k\equiv 0\mod
  N(\partial^p)^\nu$. If the $A_k$ where not all zero then this means that
  there would be a lowest $k$ for which $A_k$ is not zero. But then
  $\mu(A_k)$ would have to be dividable by $N(\partial^p)$. This is not
  possible because $\mu^{-1}(\Dcal_{N(\partial^p)^m}N(\partial^p))$ is a
  bilateral ideal of $\Dcal_{N(\partial^p)}$ which does not contain $1$
  therefore is reduced to zero. Thus $A_k=0$ for all $k$ and we have an
  injective ring homomorphism. We conclude by equality of the dimensions
  over $C_N$.
\end{proof}
Before stating the structure theorem which will conclude this section and
which we will use later to compute controlled representatives of the
equivalence class of $L$, we must state without proof a result of categorical equivalence
which is a classical exemple of Morita's equivalence between rings. We
present the result in the restrictive setting where the base ring is
commutative, even though this is not actually a requirement, so that no
ambiguity may be found on the meaning of the tensor product.
Further information on Morita equivalences of rings may be found for example
in~\cite[Chapter~6]{Anderson1974}.
\begin{theorem}[Morita]\label{thm_Morita}
  Let $R$ be a commutative ring and $n\in\mathbb{N}$. The covariant functor 
  \[\begin{array}{crcl}
    \mathcal{M}\mathrm{or}:&\mathrm{Mod}_{R}&\rightarrow&\mathrm{Mod}_{M_n(R)}\\
    &M&\mapsto&R^n\otimes_{R}M
  \end{array}.\]
  is a categorical equivalence which preserves direct sums.
\end{theorem}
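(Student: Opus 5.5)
We will show that the functor $M\mapsto R^n\otimes_R M$, where $M_n(R)$ acts on the left factor $R^n$ by matrix multiplication, has an explicit quasi-inverse. The natural candidate is $\mathcal{G}:N\mapsto e_{11}N$, where $e_{11}\in M_n(R)$ is the elementary idempotent. Here $e_{11}N$ is an $R$-module because $R$ embeds centrally in $M_n(R)$ as scalar matrices, and these commute with $e_{11}$. Equivalently, $\mathcal{G}(N)=\mathrm{Hom}_{M_n(R)}(R^n,N)$, since $R^n\simeq M_n(R)e_{11}$ as a left $M_n(R)$-module. Both functors are additive, because tensor products and $\mathrm{Hom}$ out of a fixed module commute with finite direct sums. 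This gives the ``preserves direct sums'' part. For arbitrary direct sums, tensor products commute with them, and $e_{11}(\bigoplus N_i)=\bigoplus e_{11}N_i$.

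The key steps, in order, are as follows.

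First, we compute $\mathcal{G}\circ\mathcal{M}\mathrm{or}(M)=e_{11}(R^n\otimes_R M)=(Re_1)\otimes_R M\simeq M$. This works because $R^n\otimes_R M\simeq M^n$ as $R$-modules, with matrices acting on column vectors of elements of $M$, and $e_{11}$ projects onto the first coordinate. The isomorphism $m\mapsto e_1\otimes m$ is clearly natural in $M$.

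Second, for an $M_n(R)$-module $N$, we define the map $\varphi_N:R^n\otimes_R e_{11}N\to N$ by $v\otimes x\mapsto \tilde v x$. Here $\tilde v\in M_n(R)$ is the matrix whose first column is $v$ and whose other columns are zero. Writing $v=\sum_i v_ie_i$, the map sends $e_i\otimes x$ to $e_{i1}x$. It is $M_n(R)$-linear since $A\tilde v=\widetilde{Av}$, and it is natural in $N$.

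Third, we prove that $\varphi_N$ is bijective using the matrix units. For surjectivity, any $y\in N$ can be written as $y=\sum_i e_{ii}y=\sum_i e_{i1}(e_{1i}y)$ with $e_{1i}y\in e_{11}N$. For injectivity, every element of the source can be written uniquely as $\sum_i e_i\otimes x_i$ with $x_i\in e_{11}N$. If $\sum_i e_{i1}x_i=0$, then applying $e_{1j}$ gives $x_j=e_{11}x_j=0$.

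This matrix-unit bookkeeping in the third step is the only real obstacle. The point to get right is that $\{e_i\}$ is a free basis of $R^n$ over $R$, so that the decomposition $R^n\otimes_R e_{11}N\simeq (e_{11}N)^n$ holds. Once this is in place, the two natural isomorphisms show that $\mathcal{M}\mathrm{or}$ is an equivalence with quasi-inverse $\mathcal{G}$.
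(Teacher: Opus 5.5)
The paper gives no proof of this statement. It quotes it as a classical instance of Morita equivalence and refers to Anderson--Fuller, Chapter~6. Your argument is correct and complete, and it is a hands-on specialisation of that general theory.

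In the general theory one notes that $P=M_n(R)e_{11}\simeq R^n$ is a finitely generated projective generator of $\mathrm{Mod}_{M_n(R)}$: it is a direct summand of $M_n(R)=\bigoplus_j M_n(R)e_{jj}\simeq P^n$, and $\mathrm{End}_{M_n(R)}(P)\simeq R$. The progenerator theorem then says that $\mathrm{Hom}_{M_n(R)}(P,-)\simeq e_{11}(-)$ is an equivalence with quasi-inverse $P\otimes_R-$. You replace these general lemmas by explicit matrix-unit checks. Your identity $y=\sum_i e_{i1}e_{1i}y$ is precisely the statement that $e_{11}$ is a full idempotent, i.e.\ that $P$ is a generator.

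What each route buys:
\begin{itemize}
\item The abstract route covers noncommutative $R$ at no extra cost, which the paper remarks is allowed. In your proof commutativity only serves to make $e_{11}N$ an $R$-module through central scalars, and the ring isomorphism $R\simeq e_{11}M_n(R)e_{11}$ would do instead.
\item The abstract route also gives preservation of arbitrary direct sums for free, since every equivalence preserves coproducts. Your separate additivity check is harmless but not needed.
\item Your route is self-contained and makes the correspondence concrete: $\mathcal{M}\mathrm{or}(V)\simeq V^n$ with matrices acting on columns. This is the form the paper exploits when computing $\dim_{C_N}R^p\otimes_R C_N[T]/T^{m_i}=pm_i$.
\item Having the quasi-inverse explicitly in hand also makes transparent why $\mathcal{M}\mathrm{or}$ sends indecomposable modules to indecomposable ones, which the paper uses later.
\end{itemize}

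Finally, your use of $e_{11}$ tacitly requires $n\geq 1$. This matches the paper's convention $0\notin\mathbb{N}$; for $n=0$ the statement fails unless $R=0$.
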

\begin{theorem}\label{thm_decompclass}
  Let $L\in \Fbb_q(t)\langle\partial\rangle$ be a monic divisor of some
  $N(\partial^p)^\nu$ where $N\in\Fbb_q(t^p)[Y]$ is a monic irreducible
  polynomial, separable over $\Fbb_q(t^p)$ and $\nu\in\mathbb{N}$.
  \begin{itemize}
    \item If $N(\partial^p)$ is irreducible in
      $\Fbb_q(t)\langle\partial\rangle$ then $L=N(\partial^p)^m$ with
      $m\leq \nu$ and is indecomposable.
    \item Otherwise let $N^{\nu_1}|N^{\nu_2}|\dots|N^{\nu_k}$ be the Frobenius
      invariants of $\psi^L_p$. Then there exists $L_1,\dots,L_k\in
      \Fbb_q(t)\langle\partial\rangle$ such that
      \begin{enumerate}[label=\roman*)]
        \item $L=\mathrm{LCLM}_{i=1}^k L_i$ and $\ord L=\sum_{i=1}^k \ord
          L_i$.
        \item $\ord L_i=\nu_i\deg N$ for all $i\in\{1,\dots,k\}$.
        \item Each $L_i$ is indecomposable
      \end{enumerate}
  \end{itemize}
\end{theorem}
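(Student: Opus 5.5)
The plan is to view $\Dcal_L$ as a module over the ring $A:=\Dcal_{N(\partial^p)^\nu}$. Since $L$ divides $N(\partial^p)^\nu$ on the right and $N(\partial^p)^\nu$ is central, the left ideal $\Fbb_q(t)\langle\partial\rangle N(\partial^p)^\nu$ is two-sided and lies in $\Fbb_q(t)\langle\partial\rangle L$. Hence $\Dcal_L$ is a cyclic left $A$-module. By Theorem~\ref{thm_isoextmorita}, $A\simeq\Dcal_{N(\partial^p)}[T]/T^\nu$, where $T$ acts as $N(\partial^p)$, that is, as $N(\psi_p^L)$ on $\Dcal_L$.

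\emph{First case.} Suppose $N(\partial^p)$ is irreducible. Then $\Dcal_{N(\partial^p)}=:D$ is a division algebra, by the dichotomy discussed after Hypothesis~\ref{hypo_sep}. So $A\simeq D[T]/T^\nu$ is a local ring whose left ideals are exactly $AT^j$: any element not in $AT$ has invertible constant term and is therefore a unit. The cyclic module $\Dcal_L$ is thus isomorphic to $A/AT^m$ for some $m\leq\nu$. This module is indecomposable, since its submodules form a chain. Moreover, $L$ then generates the same left ideal as $N(\partial^p)^m$: both are the annihilator-of-$1$ left ideals of isomorphic cyclic quotients with the same order and the same cyclic generator $1$. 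The cleanest way to see this is that the ideal $\Fbb_q(t)\langle\partial\rangle L/\Fbb_q(t)\langle\partial\rangle N(\partial^p)^\nu$ is a left ideal of $A$, hence equals $AT^m$. Since $L$ is monic, $L=N(\partial^p)^m$.

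\emph{Second case.} Now $D\simeq M_p(C_N)$, so $A\simeq M_p(C_N[T]/T^\nu)$. Set $R:=C_N[T]/T^\nu$, a commutative local principal ideal ring. By Theorem~\ref{thm_Morita}, finitely generated $A$-modules correspond, preserving direct sums, to finitely generated $R$-modules. The latter decompose as $\bigoplus_i R/T^{\mu_i}$, and each summand is indecomposable because $R/T^{\mu}$ is local. Hence $\Dcal_L\simeq\bigoplus_{i} W_i$ with $W_i\simeq R^p\otimes_R R/T^{\mu_i}$ indecomposable.

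Each $W_i$ is a direct summand of the cyclic module $\Dcal_L$, so it is cyclic. We can therefore write $W_i\simeq\Dcal_{L_i}$, where $L_i$ is obtained from the image of $1$ under the projection: it is a right divisor of $L$. The canonical map $\Dcal_L\to\bigoplus\Dcal_{L_i}$ is then this isomorphism, and Proposition~\ref{prop_lclm_directsum} yields (i), while indecomposability of $W_i$ gives (iii).

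For (ii), compute $\psi_p$ on $W_i$. Over $C_N$, $W_i\simeq (C_N[T]/T^{\mu_i})^p$, on which $T=N(\psi_p)$ acts with Jordan type $p$ blocks of size $\mu_i$. As an $\Fbb_q(t)$-space this has dimension $\mu_i\,p\deg N/p\cdot$, and we must track the dimension carefully. The dimension of $C_N^p$ over $\Fbb_q(t)$ is $p\cdot\deg N\cdot p/p=p\deg N$. Here the simple module is $C_N^p$, which has $\Fbb_q(t^p)$-dimension $p\deg N$ and hence $\Fbb_q(t)$-dimension $\deg N$, matching the earlier remark that irreducible divisors have order $\deg N$. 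Therefore $\ord L_i=\mu_i\deg N$.

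It remains to identify the $\mu_i$ with the $\nu_i$. On $W_i$, regarded as an $\Fbb_q(t)$-vector space, $\psi_p$ is annihilated by $N^{\mu_i}$ but not by $N^{\mu_i-1}$. Its kernel filtration $\ker N^j(\psi_p)$ has $\Fbb_q(t)$-dimensions $j\deg N$, so $W_i$ is a single cyclic block with invariant factor $N^{\mu_i}$ over $\Fbb_q(t^p)$, in the basis with $\Fbb_q(t^p)$-entries. Uniqueness of Frobenius invariants then gives $\{\mu_i\}=\{\nu_i\}$.

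The main obstacle is this last identification. One has to justify that the Frobenius form of $\psi_p^L$ over $\Fbb_q(t)$ agrees with the $\Fbb_q(t^p)$-structure; this works because invariant factors are invariant under field extension. One also has to show that $W_i$ is cyclic for $\psi_p$, which follows from the dimension count $\dim\ker N(\psi_p|_{W_i})=\deg N$.
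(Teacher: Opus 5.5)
Your proof is correct. In the reducible case it follows the paper's argument. Theorem~\ref{thm_isoextmorita} and Morita turn $\Dcal_L$ into $\bigoplus_i C_N[T]/T^{\mu_i}$, and the summands give right divisors $L_i$ through Proposition~\ref{prop_lclm_directsum}. A dimension count gives $\ord L_i=\mu_i\deg N$, and cyclicity of $\psi_p$ on each summand plus uniqueness of Frobenius invariants identifies the $\mu_i$ with the $\nu_i$. You check cyclicity via $\dim\ker N(\psi_p)=\deg N$, the paper via the degree of the minimal polynomial. Both tacitly use that $N$ stays irreducible over $\Fbb_q(t)$, which holds because $N$ is separable and $\Fbb_q(t)/\Fbb_q(t^p)$ is purely inseparable.

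Where you genuinely differ is the irreducible case. The paper divides out the largest power of $N(\partial^p)$ dividing $L$ and looks at $\Dcal_{L'}$ for the remaining cofactor $L'$. There $N(\psi_p^{L'})$ would be both bijective (by coprimality with the irreducible central operator $N(\partial^p)$) and nilpotent, which forces $L'\in\Fbb_q(t)$. Indecomposability then follows because every nontrivial right factor of $L$ is again a power of $N(\partial^p)$. You instead apply Theorem~\ref{thm_isoextmorita} in this case too. This makes $\Dcal_{N(\partial^p)^\nu}\simeq D[T]/T^\nu$, with $D$ a division algebra, a local ring whose left ideals form the chain $(T^j)$. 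Both $L=N(\partial^p)^m$ and indecomposability are read off that chain. The paper's route is more elementary here, since it does not need Theorem~\ref{thm_isoextmorita}. Yours treats both cases uniformly through the ring $\Dcal_{N(\partial^p)^\nu}$.

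On presentation, delete the sentence claiming that $C_N^p$ has $\Fbb_q(t)$-dimension $p\deg N$, together with the garbled formula before it. The correct count is the one you give right after: $\dim_{\Fbb_q(t^p)}C_N^p=p\deg N$, hence $\dim_{\Fbb_q(t)}C_N^p=\deg N$.
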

\begin{proof}
      Let us first assume that $N(\partial^p)$ is irreducible. We can
      divide $L$ by its highest power of $N(\partial^p)$ which divides it,
      so we may as well assume that $L$ and $N(\partial^p)$ are coprime and
      show that $L\in\Fbb_q(t)$.
      Then the multiplication by $N(\partial^p)$, which is $N(\psi^L_p)$ induces an automorphism of
      $\Dcal_L$. However, since $N(\partial^p)^\nu$ is a multiple of $L$,
      $N(\psi^L_p)$ is nilpotent which is only possible if $\Dcal_L=\{0\}$
      which is to say that $L\in\Fbb_q(t)$. Since any two nontrivial right factors of $L$
      would also be powers of $N(\partial^p)$, their $\mathrm{GCRD}$ would
      never be $1$ which proves that $L$ is indecomposable.\par
      We now assume that $N(\partial^p)$ is reducible which is to say that
      $\Dcal_{N(\partial^p)}\simeq M_p(C_N)$. Thus we according to
      Theorem~\ref{thm_isoextmorita}, $\Dcal_{N(\partial^p)^\nu}\simeq
      \Dcal_{N(\partial^p)}[T]/T^\nu\simeq M_p(C_N[T]/T^\nu)$. Since
      $\Dcal_L$ is a left $\Dcal_{N(\partial^p)^\nu}$-module, it
      corresponds through Morita's equivalence to a finite dimensional (over
      $C_N$) $C_N[T]/T^\nu$-module, which is to say a finite dimensional
      $C_N$-vector space $V$ provided with a nilpotent endomorphism. It
      follows that there exists $k'\in\mathbb{N}$ and $m_1\leq\dots\leq m_{k'}$
      such that $V\simeq C_N[T]/T^{m_1}\oplus\dots\oplus C_N[T]/T^{m_k'}$
      and according to Theorem~\ref{thm_Morita},
      $\Dcal_L\simeq R^p\otimes_R C_N[T]/T^{m_1}\oplus\dots\oplus R^p\otimes_R
      C_N[T]/T^{m_k'}$ as an $M_p(R)$-module, where $R=C_N[T]/T^\nu$.\\
      We set $M_i$ the submodule of $\Dcal_L$ corresponding to
      $R^p\otimes_R C_N[T]/T^{m_i}$ in that decomposition. It is
      canonically isomorphic to some $\Dcal_{L_i}$ where $L_i$ is a right
      divisor of $L$. From Proposition~\ref{prop_lclm_directsum} we have
      $L=\mathrm{LCLM}_{i=1}^{k'}L_i$ and $\ord L=\sum_{i=1}^{k'}\ord L_i$. Since the functor $\mathcal{M}\mathrm{or}$ from
      Theorem~\ref{thm_Morita} preserves direct sums and $C_N[T]/T^{m_i}$
      is an indecomposable $C_N[T]$-module, $L_i$ is indecomposable.
      Furthermore, the map $N(\psi^L_{p|M_i})$ corresponds to the
      multiplication by $T$ on $R^p\otimes_R C_N[T]/T^{m_i}$, thus the
      minimal polynomial of $\psi^L_{p|M_i}$ is $N^{m_i}$. Since
      \begin{align*}
        \dim_{\Fbb_q(t)}M_i&=p^{-1}\dim_{\Fbb_q(t^p)}M_i\\
        &=p^{-1}\deg(N)\dim_{C_N}R^p\otimes_R
      C_N[T]/T^{m_i}\\
        &=\deg(N)p^{-1}pm_i=m_i\deg(N),
      \end{align*} it follows that
      $\psi^{L}_{p|M_i}$ is cyclic. Thus $N^{m_1}|\dots|N^{m_{k'}}$ fit the
      criteria of the Frobenius invariants of $\psi^L_p$ and by unicity we
      have $k'=k$ and $m_i=\nu_i$ for all $i$.
\end{proof}

\section{Equivalence class and representative}\label{sec_buildrepr}
Theorem~\ref{thm_decompclass} allows us to know precisely, given
$L\in\Fbb_q(t)\langle\partial\rangle$, what one of its
$\mathrm{LCLM}$-decompositions should look like, meaning that we know
precisely the degrees of its coefficients. It is actually a well-known fact
that two $\mathrm{LCLM}$-decompositions of the same operator must
have factors of the same degrees. In positive characteristic, this can
actually be seen as a Corollary of Theorem~\ref{thm_decompclass}, since an
$\mathrm{LCLM}$-decomposition of $L$ also gives the Frobenius normal form
of $\psi^L_p$. In fact, as we now show, the equivalence class of $\psi^L_p$
entirely determines the isomorphism class of $\Dcal_L$ as an
$\Fbb_q(t)\langle\partial\rangle$-module.
\begin{defi}
  Let $L_1,L_2\in\Fbb_q(t)\langle\partial\rangle$. We say that $L_1$ and
  $L_2$ are equivalent if and only if $\Dcal_{L_1}$ and $\Dcal_{L_2}$ are
  isomorphic as $\Fbb_q(t)\langle\partial\rangle$-modules.
\end{defi}
\begin{lemma}\label{lem_isoindec}
  If $L_1$ and $L_2$ are two indecomposable operators in
  $\Fbb_q(t)\langle\partial\rangle$ satisfying Hypothesis~\ref{hypo_sep}, then they are equivalent if
  and only if $\chi(\psi^{L_1}_p)=\chi(\psi^{L_2}_p)$.
\end{lemma}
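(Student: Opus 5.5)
The forward direction is immediate. If $\Dcal_{L_1}\simeq\Dcal_{L_2}$ as $\Fbb_q(t)\langle\partial\rangle$-modules, the isomorphism is $\Fbb_q(t)$-linear and commutes with multiplication by $\partial^p$. It therefore conjugates $\psi^{L_1}_p$ into $\psi^{L_2}_p$, and in particular $\chi(\psi^{L_1}_p)=\chi(\psi^{L_2}_p)$. All the work is in the converse, which I will prove by reducing to the classification of Theorem~\ref{thm_decompclass} and to Morita's equivalence.

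\textbf{Step 1: the characteristic polynomial is a power of one irreducible.} Let $L$ be indecomposable and satisfy Hypothesis~\ref{hypo_sep}. By Theorem~\ref{thm_firstdecomp}, $L=\mathrm{LCLM}_i L_i$ with $\ord L=\sum\ord L_i$, so by Proposition~\ref{prop_lclm_directsum} we get $\Dcal_L\simeq\bigoplus_i\Dcal_{L_i}$. Indecomposability forces a single nontrivial summand. Hence $\chi(\psi^L_p)=N^{\nu}$ with $N$ irreducible and separable, and $L$ divides $N^\nu(\partial^p)$; after normalising, $L$ is a monic divisor of $N(\partial^p)^\nu$.

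\textbf{Step 2: the structure of $\Dcal_L$ is read off from $N$ and $\nu$.} Apply Theorem~\ref{thm_decompclass} and split into two cases.
\begin{itemize}
\item If $N(\partial^p)$ is irreducible, then $L=N(\partial^p)^m$. Comparing characteristic polynomials with Lemma~\ref{lm_multrednorm} gives $N^{pm}=N^\nu$, so $m=\nu/p$ is determined by $\chi$. Two such operators with the same $\chi$ are therefore literally equal up to a unit.
\item Otherwise indecomposability forces $k=1$ in Theorem~\ref{thm_decompclass}, so $\psi^L_p$ is cyclic with unique invariant $N^{\nu}$. Moreover, by the proof of that theorem, $\Dcal_L$ corresponds under Morita's equivalence to the $C_N[T]/T^{\nu'}$-module $C_N[T]/T^{\nu}$, viewed as a module over $\Dcal_{N(\partial^p)^{\nu'}}\simeq M_p(C_N[T]/T^{\nu'})$ for any $\nu'\ge\nu$.
\end{itemize}

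\textbf{Step 3: compare $L_1$ and $L_2$ in the reducible case.} Given $L_1,L_2$ with the same $\chi=N^\nu$, both are divisors of $N(\partial^p)^\nu$, so both are modules over the same ring $\Dcal_{N(\partial^p)^\nu}$. Both correspond to the same $R$-module $C_N[T]/T^\nu$, where $R=C_N[T]/T^\nu$. Since $\mathcal{M}\mathrm{or}$ is an equivalence (Theorem~\ref{thm_Morita}), $\Dcal_{L_1}\simeq\Dcal_{L_2}$ as $\Dcal_{N(\partial^p)^\nu}$-modules. An isomorphism of $\Dcal_{N(\partial^p)^\nu}$-modules is in particular an isomorphism of $\Fbb_q(t)\langle\partial\rangle$-modules, because the action of $\Fbb_q(t)\langle\partial\rangle$ on each $\Dcal_{L_i}$ factors through the quotient $\Dcal_{N(\partial^p)^\nu}$.

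\textbf{Main obstacle.} The delicate point is in Step 3: justifying that the Morita-corresponding module of $\Dcal_L$ is exactly $C_N[T]/T^\nu$, and not merely some module of the same dimension. This is handled inside the proof of Theorem~\ref{thm_decompclass}. There, the nilpotent endomorphism $T$ is identified with $N(\psi^L_p)$, and the invariants $m_i$ are matched with the Frobenius invariants. Applied with $k=1$, this pins down the module up to isomorphism. Throughout, one must use the same isomorphism $\Dcal_{N(\partial^p)^\nu}\simeq M_p(R)$ for both operators, which is legitimate since it depends only on $N$ and $\nu$.
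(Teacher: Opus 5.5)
Your proof is correct and follows essentially the same route as the paper. You reduce via Theorem~\ref{thm_firstdecomp} to $\chi=N^\nu$, settle the case $N(\partial^p)$ irreducible by $L_i=N(\partial^p)^{m_i}$, and in the reducible case use Theorem~\ref{thm_isoextmorita} and Morita's equivalence to see that both $\Dcal_{L_i}$ correspond to the same $C_N[T]/T^\nu$-module. The only cosmetic difference is that the paper identifies each Morita module as some $C_N[T]/T^{\nu_i}$ from indecomposability and then gets $\nu_1=\nu_2$ by comparing orders, whereas you read the exponent $\nu$ off the Frobenius-invariant matching in the proof of Theorem~\ref{thm_decompclass}; this amounts to the same dimension count.
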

\begin{proof}
  The fact that $\Dcal_{L_1}\simeq\Dcal_{L_2}$ implies
  $\chi(\psi^{L_1}_p)=\chi(\psi^{L_2}_p)$ is immediate. Conversely, let us
  assume that $\chi(\psi^{L_1}_p)=\chi(\psi^{L_2}_p)$. This implies in
  particular that $\ord(L_1)=\ord(L_2)$. Since $L_1$ and
  $L_2$ are indecomposable, according to Theorem~\ref{thm_firstdecomp},
  $\chi(\psi^{L^*}_p)$ is of the form $N^\nu$ where $N$ is an irreducible
  polynomial over $\Fbb_q(t^p)$. Then, if $N(\partial^p)$ is irreducible we
  know from Theorem~\ref{thm_decompclass} that we must have
  $L_i=N(\partial^p)^{m_i}$ for some $m_i$ and $i\in\{1,2\}$. By degree
  equality we have $m_1=m_2$ and $L_1=L_2$. If now $N(\partial^p)$ is
  reducible then $\Dcal_{N(\partial^p)}\simeq M_p(C_N)$ and
  $\Dcal_{N(\partial^p)^\nu}\simeq M_p(C_N[T]/T^\nu)$ according to
  Theorem~\ref{thm_isoextmorita}. Thus $\Dcal_{L_1}$ and $\Dcal_{L_2}$
  correspond through Morita's equivalence to two indecomposable (see
  Theorem~\ref{thm_Morita}, $\mathcal{M}\mathrm{or}$ preserves direct sums)
  $C_N[T]/T^\nu$-modules, $C_N[T]/T^{\nu_1}$ and $C_N[T]/T^{\nu_2}$
  respectively. It follows that $\Dcal_{L_i}\simeq R^p\otimes_R
  C_N[T]/T^{\nu_i}$ for $i\in\{1,2\}$ and $R=C_N[T]/T^\nu$. By equality of
  the orders and thus of the dimensions, we must have $\nu_1=\nu_2$.
\end{proof}
\begin{corollary}\label{cor_critiso}
  Two operators $L_1,L_2\in\Fbb_q(t)\langle\partial\rangle$ (not necessarily divisor
  of $N(\partial^p)^\nu$) satisfying Hypothesis~\ref{hypo_sep} are equivalent if and only if $\psi^{L_1}_p$ and
  $\psi^{L_2}_p$ are equivalent as $\Fbb_q(t)$-linear maps.
\end{corollary}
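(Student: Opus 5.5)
The forward direction is immediate. An isomorphism $\phi:\Dcal_{L_1}\to\Dcal_{L_2}$ of $\Fbb_q(t)\langle\partial\rangle$-modules is in particular $\Fbb_q(t)$-linear. It also commutes with left multiplication by $\partial^p$, so $\phi\circ\psi^{L_1}_p=\psi^{L_2}_p\circ\phi$, and the two $p$-curvatures are conjugate.

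For the converse, assume $\psi^{L_1}_p$ and $\psi^{L_2}_p$ are equivalent. Then they have the same characteristic polynomial $N_1^{\nu_1}\cdots N_n^{\nu_n}$ and, more precisely, the same Frobenius invariants over $\Fbb_q(t)$. The first step is to check that these invariants can be read over $\Fbb_q(t^p)$. Since both matrices are conjugate (over $\Fbb_q(t)$) to matrices with coefficients in $\Fbb_q(t^p)$, and the Frobenius normal form is invariant under field extension, the invariant factors computed over $\Fbb_q(t)$ coincide with those of the $\Fbb_q(t^p)$-forms. These invariant factors are polynomials in $\Fbb_q(t^p)[Y]$.

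Next apply Theorem~\ref{thm_firstdecomp} to each $L_j$. This gives $\Dcal_{L_j}\simeq\bigoplus_i\Dcal_{L_{j,i}}$ with $\Dcal_{L_{j,i}}\simeq\ker N_i^{\nu_i}(\psi^{L_j}_p)$. The restriction of $\psi^{L_j}_p$ to this summand is the $N_i$-primary part, and its Frobenius invariants are the $N_i$-primary parts of those of $\psi^{L_j}_p$. The primary components of conjugate endomorphisms are conjugate, so it suffices to treat the case where $\chi(\psi^{L_j}_p)=N^\nu$ for a single irreducible $N$.

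In that case apply Theorem~\ref{thm_decompclass}. If $N(\partial^p)$ is irreducible, then $L_1$ and $L_2$ are powers of $N(\partial^p)$ of the same order, hence equal. Otherwise, each $\Dcal_{L_j}$ decomposes as $\bigoplus_{i=1}^k\Dcal_{L_{j,i}}$ with $L_{j,i}$ indecomposable and $\ord L_{j,i}=\nu_i\deg N$. In the proof of that theorem, each summand carries a cyclic restriction of the $p$-curvature with minimal polynomial $N^{\nu_i}$, so $\chi(\psi^{L_{j,i}}_p)=N^{\nu_i}$. Since $L_1$ and $L_2$ have the same Frobenius invariants $N^{\nu_1}|\dots|N^{\nu_k}$, Lemma~\ref{lem_isoindec} gives $\Dcal_{L_{1,i}}\simeq\Dcal_{L_{2,i}}$ for each $i$. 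Summing these isomorphisms yields $\Dcal_{L_1}\simeq\Dcal_{L_2}$.

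The main obstacle is the bookkeeping step ensuring that the indecomposable summands of $L_1$ and $L_2$ can be matched with equal characteristic polynomials. This requires identifying $\chi(\psi^{L_{j,i}}_p)$ with the $i$-th invariant factor, which follows from the cyclicity argument at the end of the proof of Theorem~\ref{thm_decompclass}. It also requires the compatibility of Frobenius invariants between $\Fbb_q(t)$ and $\Fbb_q(t^p)$, which is standard.
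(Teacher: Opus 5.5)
Your proof is correct and follows essentially the same route as the paper: you reduce to a single primary component via Theorem~\ref{thm_firstdecomp}, split each operator into indecomposables via Theorem~\ref{thm_decompclass}, and match the summands pairwise using Lemma~\ref{lem_isoindec}. Your explicit identification $\chi(\psi^{L_{j,i}}_p)=N^{\nu_i}$ is a welcome bit of care, since that equality is exactly what the hypothesis of Lemma~\ref{lem_isoindec} requires, whereas the paper only records equality of orders.
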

\begin{proof}
  It is immediate that if $\Dcal_{L_1}$ and $\Dcal_{L_2}$ are isomorphic as
  $\Fbb_q(t)\langle\partial\rangle$-modules, then $\psi^{L_1}_p$ and
  $\psi^{L_2}_p$ are equivalent. Conversely, we now assume $\psi^{L_1}_p$
  and $\psi^{L_2} _p$ to be equivalent. In particular they have the same
  characteristic polynomial and $L_1$ and $L_2$ have the same order. 
  We can apply Theorem~\ref{thm_firstdecomp} and reduce the problem to the case where 
  both $L_1$ and $L_2$ are divisors of $N(\partial^p)^\nu$. Then according
  to Theorem~\ref{thm_decompclass}, either $L_1$ and $L_2$ are equal to
  the same power of $N(\partial^p)$ (since they are of the same order), or
  the orders of the factors of their $\mathrm{LCLM}$-decompositions are
  given by the Frobenius invariants of their $p$-curvatures. Since
  $\psi^{L_1}_p$ and $\psi^{L_2}_p$ are equivalents, those invariants are
  the same and there exists, $k\in\mathbb{N}$ and
  $L_{i,j}\in\Fbb_q(t)\langle\partial\rangle$ indecomposable for
  $(i,j)\in\{1,2\}\times\{1,\dots,k\}$ such that
  $\Dcal_{L_i}=\Dcal_{L_{i,1}}\oplus\dots\oplus\Dcal_{L_{i,k}}$ with in
  addition $\ord L_{1,j}=\ord L_{2,j}$ for all $j$. According to
  Lemma~\ref{lem_isoindec}, $\Dcal_{L_{1,j}}\simeq\Dcal_{L_{2,j}}$ for all
  $j$ and thus $\Dcal_{L_1}\simeq\Dcal_{L_2}$.
\end{proof}
Since we now know how to compute the isomorphism class of $\Dcal_L$ from
the Frobenius normal form of its $p$-curvature (for which computation,
efficient algorithms exist, see for example~\cite{BoCaSc16}), our goal is
now to show how to compute representatives of arbitrarily chosen
isomorphism classes, of which an
$\mathrm{LCLM}$-decomposition is known. Since
once it is established whether or not $N(\partial^p)$ is reducible, the
irreducible case is trivial, we will assume for the rest of this section that
$N(\partial^p)$ is reducible. Note that in their thesis, the author of this
paper
presents a polynomial time method to determine whether or not
$N(\partial^p)$ is reducible (see~\cite[Section~3.3.3]{PagPhD}
or~\cite[Theorem~3.9]{Pag24}).
\begin{lemma}\label{lm_exindec}
  Let $n\in\mathbb{N}$. The operator $(x\partial)^{n}$ is
  indecomposable in $K\langle\partial\rangle$ where $K$ is any separable
  extension of $\Fbb_q(t)$.
\end{lemma}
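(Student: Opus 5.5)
Write $\theta=x\partial$ (here $x$ denotes the variable $t$). The plan is to show that $\Dcal_{\theta^n}$ is a cyclic module over the ring $\Dcal_{\theta^n}$ that has a unique maximal submodule chain. A simpler route: show that $\theta^n$ has a \emph{unique} monic right divisor of each order $m\le n$, namely $\theta^m$. If two nontrivial right divisors $L_1,L_2$ had $\mathrm{GCRD}(L_1,L_2)=1$, then, being of orders $m_1,m_2\geq 1$, they would equal $\theta^{m_1},\theta^{m_2}$ up to units. These have common right divisor $\theta\notin K$, a contradiction. So the whole proof reduces to the uniqueness claim.

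\textbf{Step 1: Newton polygon / local exponents at $x=0$.} Over $K$, consider the completion at the place above $x=0$, or more elementarily the valuation $v$ at $x$ extended to $K$. Any monic right divisor $R$ of $\theta^n$ of order $m$ can be normalised as $R=\sum_i r_i\theta^i$ with $r_m=1$. The indicial equation at $0$ of $\theta^n$ is $s^n=0$, so all local exponents of $\theta^n$ are $0$ (mod $p$).

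\textbf{Step 2: induction on $n$.} I would avoid analytic arguments in positive characteristic and argue algebraically. Let $R$ be a monic right divisor of $\theta^n$ of order $m\ge1$. Its $p$-curvature characteristic polynomial divides that of $\theta^n$. The $p$-curvature of $\theta$ is $\theta^p-\theta$ (since $(x\partial)^p=x\partial$ in characteristic $p$, as $x^p\partial^p$ acts trivially modulo the relation). On $\Dcal_\theta$ it is zero, so $\chi(\psi_p^{\theta^n})=Y^{n}$ up to the change of variable, giving $N=Y$ and $\nu=n$.

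Then $\Dcal_{\theta^n}$ is a module over $\Dcal_{\partial^{pn}}\simeq M_p(C[T]/T^n)$ (Theorem~\ref{thm_isoextmorita}, the case $N(\partial^p)=\partial^p$ being reducible). By Theorem~\ref{thm_decompclass}, indecomposability is equivalent to $\psi_p^{\theta^n}$ being cyclic, i.e. having a single Frobenius invariant $Y^n$. So the cleaner plan is to show that $\psi_p$ on $\Dcal_{\theta^n}$ has minimal polynomial $Y^n$. Equivalently, $\partial^{p(n-1)}\notin K\langle\partial\rangle\theta^n$.

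\textbf{Step 3: the main computation.} In $\Dcal_{\theta^n}$ use the basis $1,\theta,\dots,\theta^{n-1}$ over $K$. Since $x^p\partial^p=\theta^p-\theta=\theta(\theta-1)\cdots(\theta-p+1)$, one has $\partial^p=x^{-p}\prod_{j}(\theta-j)$. This is $x^{-p}$ times a polynomial in $\theta$ with a simple root $0$. Moving $x^{-p}$ across $\theta$ only shifts $\theta\mapsto\theta-p=\theta$, so $x^{-p}$ commutes with $\theta$. Hence $\partial^{pk}=x^{-pk}\bigl(\theta\, u(\theta)\bigr)^k$, where $u(0)\neq0$ and $u(\theta)$ is invertible modulo $\theta^n$ in $\Fbb_p[\theta]/\theta^n$. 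Therefore $\psi_p^{k}(1)=x^{-pk}\theta^k u(\theta)^k$ is nonzero for $k<n$, which gives the claim.

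The main obstacle is making this commutation rigorous: the identity $\theta^p-\theta=x^p\partial^p$ and the fact that $x^{\pm p}$ is central over $K$, which uses separability of $K$ so that the derivation extends and kills $p$-th powers. Once that is in place, cyclicity of $\psi_p$ gives a single Frobenius invariant. Then Theorem~\ref{thm_decompclass} (or directly the Morita correspondence with the indecomposable module $C[T]/T^n$) yields that $\theta^n$ is indecomposable.
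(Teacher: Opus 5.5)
Your proof is correct, but it establishes the key step differently from the paper. Both arguments reduce, via Theorem~\ref{thm_decompclass}, to showing that the nilpotent map $\psi_p$ on $\Dcal_{\theta^n}$ is cyclic, i.e.\ has minimal polynomial $Y^n$ (the paper's ``$Y^p$'' is a typo for this).

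The paper proves $\dim_K\ker\psi_p=1$ in two moves:
\begin{itemize}
\item it quotes \cite[Lemma~13.2]{PuSi03}, which equates this dimension with the $C$-dimension of the solutions of $\theta^n$ in $K$;
\item it then uses that $1,x,\dots,x^{p-1}$ is a $C$-basis of $K$ (this is where separability, i.e.\ $C=K^p$, enters), together with $\theta^n(x^i)=i^nx^i$, to see that only constants are solutions.
\end{itemize}

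You instead compute $\psi_p^{n-1}(1)$ directly. From $x^p\partial^p=\theta^p-\theta=\theta(\theta^{p-1}-1)$, and the fact that the constant $x^{-p}$ commutes with $\theta$, you get $\psi_p^{n-1}(1)=(-1)^{n-1}x^{-p(n-1)}\theta^{n-1}\neq 0$ in the basis $1,\theta,\dots,\theta^{n-1}$. The same computation, with centrality of $\partial^{pn}$, gives $\psi_p^n=0$.

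What each route buys:
\begin{itemize}
\item Yours is self-contained. It needs neither the external lemma nor the structure of $K$ over its constants, and the computation works verbatim over any differential field extending $\Fbb_q(t)$.
\item The paper's is more conceptual, since it ties the Jordan structure of $\psi_p$ to solution spaces.
\item The last implication does not really need Morita either. A nontrivial $K\langle\partial\rangle$-direct-sum decomposition of $\Dcal_{\theta^n}$ is $\psi_p$-stable and would force $\dim_K\ker\psi_p\geq 2$.
\end{itemize}

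Some side remarks in your write-up are wrong or unused, though Steps~2--3 do not rely on them:
\begin{itemize}
\item $(x\partial)^p=x\partial$ is false in $K\langle\partial\rangle$. The correct identity is $(x\partial)^p=x\partial+x^p\partial^p$, which is the one you actually use in Step~3; it shows $\psi_p=0$ on $\Dcal_\theta$.
\item The opening reduction to uniqueness of monic right divisors is never proved and is not needed.
\item Step~1 on local exponents plays no role.
\item Derivations kill $p$-th powers unconditionally. Separability is only needed for the derivation to extend to $K$.
\end{itemize}
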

\begin{proof}
  From the multiplicativity of $L\mapsto \chi(\psi^L_p)$ (see
  Lemma~\ref{lm_multrednorm}) and since $\chi(\psi^\partial_p)(Y)=Y$, it
  follows that $\chi(\psi^{(x\partial)^n}_p)(Y)=Y^n$, thus
  $\psi^{(x\partial)^n}_p$ is nilpotent. According to
  Theorem~\ref{thm_decompclass}, it is enough to show that
  $\chi_{min}(\psi^{(x\partial)^n}_p)=Y^p$ which is to say that
  $\psi^{(x\partial)^n}_p$ is nilpotent of maximal rank. This is equivalent
  to saying that $\ker \psi^{(x\partial)^n}_p=1$. Furthermore, from
  \cite[Lemma~13.2]{PuSi03}, we know that $\dim_K\ker
  \psi^{(x\partial)^n}_p=\dim_C \{f\in K|(x\partial)^n(f)=0\}$, where $C$
  is the constant field of $K$. Since $K$ is a separable extension of
  $\Fbb_q(t)$, $C$ is the set of all $p$-th powers of elements of $K$ and
  $1,x,\dots,x^{p-1}$ is a $C$-basis of $K$. Thus if $f\in K=\sum_{i=0}^{p-1}
  f_i x^i$, with $f_i\in C$, then we have
  $(x\partial)^n(\sum_{i=0}^{p-1}f_i x^i)=\sum_{i=1}^{p-1}i^nf_i x^i$ which
  can only be zero if $f_i=0$ for all $i>0$ which is to say that $f\in C$.
\end{proof}
In their thesis~\cite[Section~3.4]{PagPhD}, the author showed, given $N\in\Fbb_q(t^p)[Y]$, how
to compute a ``small'' $f_N\in K_N$ such that $(\partial-f_N)$ is a divisor
of $\partial^p-y_N$ when $N(\partial^p)$ is reducible. Since we assumed
that to be the case we fix one such $f_N$ for
Proposition~\ref{prop_buildrepr}.
\begin{prop}\label{prop_buildrepr}
  Let $(m_1,\dots,m_p)\in(\mathbb{N}\cup\{0\})^p$. For each
  $i\in\{1,\dots,p\}$ let $L_i\in\Fbb(t)_q\langle\partial\rangle$ be a left
  multiple of $(x\partial-xf_N+i)^{m_i}$ of order $m_i\deg(N)$. Let
  $L=\mathrm{LCLM}_{i=1}^p L_i$. Then
  \begin{enumerate}[label=\roman*)]
    \item $L_i$ is indecomposable.
    \item $\chi(\psi^{L_i}_p)=N^{m_i}$.
    \item $\ord L=\sum_{i=1}^p\ord L_i$.
  \end{enumerate}
  which is to say that the $L_i$ give an $\mathrm{LCLM}$-decomposition of
  $L$.
\end{prop}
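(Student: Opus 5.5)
Throughout, $x$ denotes the variable $t$, as in the statement. The plan is to pass to the extension $K_N$, where $N(\partial^p)$ is replaced by $\partial^p-y_N$, and to use the translation $\partial\mapsto\partial-f_N$ to reduce to the model operators $(x\partial+i)^{m_i}$ of Lemma~\ref{lm_exindec}.

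\emph{Step 1 (a twist).} Consider the $K_N$-algebra automorphism $\sigma$ of $K_N\langle\partial\rangle$ defined by $\sigma(\partial)=\partial-f_N$. It is compatible with the Leibniz rule because $f_N\in K_N$. It sends $x\partial+i$ to $x\partial-xf_N+i$. Since $\partial-f_N$ right-divides $\partial^p-y_N$, the operator $(\partial-f_N)^p=\partial^p-f_N^{(p-1)}-f_N^p$ is central. Hence $f_N^{(p-1)}+f_N^p=y_N$, and $\sigma(\partial^p)=\partial^p-y_N$.

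\emph{Step 2 (the characteristic polynomial over $K_N$).} Over $K_N$, Lemma~\ref{lm_multrednorm} together with $\chi(\psi^{x\partial+i}_p)=Y$ gives
\[\chi(\psi^{(x\partial-xf_N+i)^{m_i}}_p)=(Y-y_N)^{m_i}.\]
The characteristic polynomial of the $p$-curvature is unchanged by base extension. The order-$m_i$ operator $(x\partial-xf_N+i)^{m_i}$ right-divides $L_i$ over $K_N$. So by multiplicativity, $(Y-y_N)^{m_i}$ divides $\chi(\psi^{L_i}_p)$, which has coefficients in $\Fbb_q(t^p)$.

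Every Galois conjugate of $(Y-y_N)^{m_i}$ then also divides $\chi(\psi^{L_i}_p)$. Using separability of $N$, the conjugates are coprime, so $N^{m_i}$ divides $\chi(\psi^{L_i}_p)$. Comparing degrees, both have degree $m_i\deg N$, which proves (ii).

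\emph{Step 3 (the modules $\Dcal_{L_i}$ and order additivity).} I expect this to be the main obstacle: controlling the Frobenius structure of $\psi^{L_i}_p$ and simultaneously showing that the summands do not interact. Over $K_N$, the module $K_N\otimes\Dcal_{L_i}$ splits according to the $\deg N$ conjugate roots. The $y_N$-component is a quotient of $\Dcal_{(x\partial-xf_N+i)^{m_i}}$ of the same dimension $m_i$, hence isomorphic to it. Through $\sigma$, this is the twist of $\Dcal_{(x\partial+i)^{m_i}}$.

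By the proof of Lemma~\ref{lm_exindec}, the translate $(x\partial+i)^{m_i}$, conjugate to $(x\partial)^{m_i}$ by $x^{i}$, has cyclic nilpotent $p$-curvature. Hence $\psi_p$ on this component is $y_N+{}$(a cyclic nilpotent). Descending, the minimal polynomial of $\psi^{L_i}_p$ is $N^{m_i}$, so $\psi^{L_i}_p$ is cyclic. Theorem~\ref{thm_decompclass} (the case $k=1$) then yields (i).

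For (iii), it suffices by Proposition~\ref{prop_lclm_directsum} to show that the sum of the images in $K_N\otimes\Dcal_L$ of the $y_N$-components is direct. By Morita equivalence, Theorem~\ref{thm_Morita}, this is a statement about $C_N[T]/T^\nu$-modules. The key point is that the different $i$ give genuinely different cyclic vectors. I would check that the $\partial$-module morphisms $\Dcal_{(x\partial-xf_N+i)^{m_i}}\to\Dcal_{(x\partial-xf_N+j)^{m_j}}$ cannot glue the socles when $i\neq j$, since the kernels of $\partial-f_N$ they produce are $x^{-i}$ versus $x^{-j}$ times constants.

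Concretely, I would compute the solution spaces of the conjugates $(x\partial+i)^{m_i}$ in a Picard--Vessiot-type extension $K_N[\log x]$. There the solutions are $x^{-i}\log^k x$ with $k<m_i$. For distinct $i\in\{1,\dots,p\}$, these are linearly independent over the constants, because the residues of $x^{-i}$ modulo $p$ differ. Their union therefore spans a space of dimension $\sum m_i$. This forces the order of $\mathrm{LCLM}$ to be $\sum\ord L_i$, first on the $y_N$-components and then after descent by Galois conjugation.
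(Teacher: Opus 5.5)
Your (i) and (ii) are essentially the paper's argument. You pass to $K_N$, use that $L_i'=(x\partial-xf_N+i)^{m_i}$ right-divides $L_i$ and has $p$-curvature $y_N+{}$(cyclic nilpotent), upgrade $(Y-y_N)^{m_i}$ to $N^{m_i}$ by separability, and conclude by counting orders. Working with $\chi$ directly in Step~2 is slightly cleaner than the paper's use of $\chi_{\min}$.

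The gap is in (iii). Your concrete argument asserts that the elements $x^{-i}\log^k x$, $0\le k<m_i$, are linearly independent over the constants. In characteristic $p$ this fails as soon as some $m_i>p$. Since $(\log^p x)'=p\log^{p-1}x\cdot x^{-1}=0$, $\log^p x$ is itself a constant, so $x^{-i}\log^p x$ is a constant multiple of $x^{-i}$. Your list therefore gives only $\min(m_i,p)$ independent solutions of $(x\partial+i)^{m_i}$. The proposition allows arbitrary $m_i$, and in Algorithm~\ref{algo_nicerepr} one has $m_i=\nu_N(Q_i)$, which can exceed $p$. So the count $\sum m_i$ is not established.

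Two further points are missing or misstated. The step ``then after descent by Galois conjugation'' is only asserted. The Morita paragraph is also misformulated: the $\Dcal_{L_i}$ are quotients of $\Dcal_L$, not submodules. What must be shown is surjectivity of $\Dcal_L\to\bigoplus_i\Dcal_{L_i}$, not directness of a sum of ``images'' inside $K_N\otimes\Dcal_L$.

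Both points can be repaired without logarithms.
\begin{itemize}
\item \emph{The $y_N$-part.} The operators $x\partial-xf_N+i$, $i\in\Fbb_p$, differ by elements of $\Fbb_p$. Hence they lie in the commutative polynomial ring $\Fbb_p[x\partial-xf_N]$, where their powers are pairwise coprime. B\'ezout identities there give $\mathrm{GCRD}=1$ at each inductive step. So $\mathrm{LCLM}_i L_i'=\prod_i L_i'$, of order $\sum m_i$, for all $m_i$.
\item \emph{The descent.} Work over a Galois closure $K_S$. The map $K_S\otimes\Dcal_L\to\bigoplus_i K_S\otimes\Dcal_{L_i}$ commutes with the central element $\partial^p$, so surjectivity can be checked on each generalized eigenspace of $\psi_p$. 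By your own dimension argument, the projection of $K_S\otimes\Dcal_{L_i}$ onto its $\tau(y_N)$-eigenspace is the canonical quotient onto $\Dcal_{\tau(L_i')}$, with $1\mapsto1$. Since $\tau$ fixes the $L_i$, each eigenspace reduces to the $y_N$ case.
\end{itemize}

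The paper takes another route.
\begin{itemize}
\item By induction it only needs $\mathrm{GCRD}(L',L_p)=1$ for $L'=\mathrm{LCLM}_{i<p}L_i$.
\item A common irreducible factor may be assumed to divide $\partial^p-y_N$, hence to divide $\mathrm{GCRD}(L_p,\partial^p-y_N)=\partial-f_N$.
\item Meanwhile $\mathrm{GCRD}(L',\partial^p-y_N)=\mathrm{LCLM}_{i<p}(\partial-f_N+\frac{i}{x})$.
\item After shifting away $f_N$, only the first-order solutions $x^{-i}$ ($1\le i\le p-1$) are compared with the constants.
\end{itemize}
This looks only at the part annihilated by $\partial^p-y_N$, so it is insensitive to the size of the $m_i$.
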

\begin{proof}
  We claim that $L'_i:=(x\partial-xf_N+i)^{m_i}$ is indecomposable as the image
  of $(x\partial)^{m_i}$ by the ring automorphism of
  $K_N\langle\partial\rangle$ which is the identity on $K_N$ and maps
  $\partial$ onto $\partial-f_N+\frac{i}{x}$. Furthermore,
  $(\partial-f_N+\frac{i}{x})^p=\partial^p-(f_N+\frac{i}{x})^{(p-1)}-(f_N+\frac{i}{x})^p=\partial^p-y_N$
  (see~\cite[Lemma~1.4.2]{Put95}). Since it is indecomposable,
  $\chi_{min}(\psi^{L'_i}_p)(Y)=(Y-y_N)^{m_i}$. We see that
  $(Y-y_N)^{m_i}|\chi_{min}(\psi^{L_i}_p)$, since $L_i$ is a multiple
  of $L_i'$. Since $L_i\in\Fbb_q(t)\langle\partial\rangle$ and $N$ is the minimal
  polynomial of $y_N$ over $\Fbb_q(t^p)$, it follows that
  $N^{m_i}|\chi_{min}(\psi^{L_i}_p)$. Since $L_i$ is exactly of order
  $m_i\deg(N)$, it follows that $\psi_p^{L_i}$ is cyclic, which means that
  $L_i$ is indecomposable by Theorem~\ref{thm_decompclass}.\\
  There remains to show that $\ord L=\sum_{i=1}^p \ord L_i$. We proceed by
  induction. Let $L'=\mathrm{LCLM}_{i=1}^{p-1} L_i$ and let us assume that
  we have shown that $\ord L'=\sum_{i=1}^{p-1}\ord L_i$. We want to show
  that $\ord L=\ord L'+\ord L_p$. According to
  Proposition~\ref{prop_lclm_directsum}, it is enough to show that $L'$ and
  $L_p$ are coprime. Let $K_S$ be the splitting field of $N$ over
  $\Fbb_q(t)$. It is a Galois extension of $K_N$. If $L'$ and $L_p$ were
  not coprime, we could take $L^*\in
  K_S\langle\partial\rangle\backslash K_S$ an irreducible divisor of $L_p$
  and $L'$. Since
  $N(\partial^p)=\prod_{\sigma\in\mathrm{Gal}(K_S/\Fbb_q(t))}\partial^p-\sigma(y_N)$
  and $L^*|N(\partial^p)$, it follows from Theorem~\ref{thm_firstdecomp}
  that $L^*$ must divide $\partial^p-\sigma(y_N)$ for some
  $\sigma\in\mathrm{Gal}(K_S/\Fbb_q(t))$. With no loss of generality, we
  may assume $L^*$ to be a divisor of $\partial^p-y_N$. Then $L^*$
  must be a factor of
  $\mathrm{GCRD}(L_p,\partial^p-y_N)=\partial-f_N$ thus $L^*=\partial-f_N$.
  We claim that $\mathrm{GCRD}(\partial^p-y_N,L')=\mathrm{LCLM}_{i=1}^{p-1}
  \partial-f_N+\frac{i}{x}$. Indeed, by induction hypothesis,
  $\Dcal_{L'}\simeq \bigoplus_{i=1}^L \Dcal_{L_i}$ thus
  $\Dcal_{L'}(\partial^p-y_N)=\bigoplus_{i=1}^{p-1}\Dcal_{L_i}\partial^p-y_N$.
  It follows, by definition of the $\mathrm{GCRD}$ that
  $\mathrm{GCRD}(L',\partial^p-y_N)=\mathrm{LCLM}_{i=1}^{p-1}
  \mathrm{GCRD}(\partial^p-y_N,L_i)=\mathrm{LCLM}_{i=1}^{p-1}\partial-f_N+\frac{i}{x}$.
  We make a shift
  by $f_N$ so that we may assume that $f_N=0$ and $L^*=\partial$ and show that it is not a
  divisor of $\mathrm{LCLM}_{i=1}^{p-1}\partial+\frac{i}{x}$. But that is
  easy to see since the space of solutions in $K_S$ of
  $\mathrm{LCLM}_{i=1}^{p-1}\partial-\frac{i}{x}$ is
  $\bigoplus_{i=1}^{p-1}C_S x^i$ which does not contain $C_S$ the space of
  solutions of $\partial$ (where $C_S$ is the constant field of $K_S$).
\end{proof}
\begin{remark}
  Lemma~\ref{lm_exindec} and Proposition~\ref{prop_buildrepr} stay true when replacing $x$ by any non constant
  $g\in K$ or $\Fbb_q(t)$. The proof is the same but using the basis
  $1,g,\dots,g^{p-1}$ of $K$ instead.
\end{remark}
We no longer assume $L, N$ and $\nu$ to be those fixed in page 3.\par
\vspace{1mm}
Proposition~\ref{prop_buildrepr} can now be used, to devise an algorithm
computing, given a sequence of polynomials $P_1|\dots|P_m\in\Fbb_q(t^p)[Y]$
(which uniquely determine an equivalence class of operators according
to Corollary~\ref{cor_critiso}), a representative of the equivalence class
of operators for which the $P_i$ are the Frobenius invariants of their
$p$-curvatures. We restrict ourselves to polynomials
satisfying Hypothesis~\ref{hypo_reducible}, as other cases will be better
handled separately in the final $\mathrm{LCLM}$-decomposition algorithm
(Algorithm~\ref{algo_LCLMdec}).
\begin{hypo}\label{hypo_reducible}
  Let $Q\in\Fbb_q(t)[Y]$. We assume that all irreducible factors $N$ of $Q$
  are separable over $\Fbb_q(t)$ and such that $N^p(\partial)$ is reducible
  in $\Fbb_q(t)\langle\partial\rangle$.
\end{hypo}
\begin{algo}
  \begin{flushleft}
    \emph{Input:} $Q_1|Q_2\dots|Q_m\in\Fbb_q(t)[Y]$ with $m\leq p$ such
    that $Q_m$ respects Hypothesis~\ref{hypo_reducible}.\\
    \emph{Output:} \begin{itemize}
      \item$L^*\in\Fbb_q(t)\langle\partial\rangle$ such that the
    Frobenius invariants of $\psi^{L_*}_p$ are $P_1|P_2\dots |P_m$ with
    $P_i(X^p)=Q_i^p(X)$ for all $i$.
  \item $L^*_1,\dots,L_k^*$ a $\mathrm{LCLM}$-decomposition of $L_*$.
    \end{itemize}
  \end{flushleft}
  \BlankLine
  \begin{enumerate}
    \item Let \emph{list\_factor} be the list of irreducible factors of
      $Q_m$.
    \item \textbf{For} $N$ \textbf{in} \emph{list\_factor}:
      \begin{enumerate}
        \item Let $a_N$ be a root of $N$ in a separable closure of $\Fbb_q(t)$.
        \item Compute $f_N\in \Fbb_q(t)[a_N]$ such that
          $f_N^{(p-1)}+f_N^p=a_N^p$
          using~\cite[Algorithm~9]{PagPhD}.
      \end{enumerate}
    \item Let \emph{LCLM\_decomp} be an empty list
    \item \textbf{For} $i$ \textbf{in} $\{1,\dots,m\}$:
      \begin{enumerate}
    \item \textbf{For} $N$ \textbf{in} \emph{list\_factor}:
      \begin{enumerate}
        \item Compute $\nu_N(Q_i)$ the multiplicity of $Q_i$ in $N$.
        \item Compute $N_{i,N}$ the minimal multiple of
          $(x\partial-xf_N)^{\nu_N(Q_i)}$ in
          $\Fbb_q(t)\langle\partial\rangle$.
        \item Add $N_{i,N}(\partial+\frac{i}{x})$ to \emph{LCLM\_decomp}.
      \end{enumerate}
      \end{enumerate}
    \item Compute and return the $\mathrm{LCLM}$ of the elements of
      \emph{LCLM\_decomp} using~\cite{BCSZ12} together with \emph{LCLM\_decomp}.
  \end{enumerate}
  \caption{nice\_repr}
  \label{algo_nicerepr}
\end{algo}
\begin{defi}\label{def_degree}
  Let $L\in\Fbb_q(t)\langle\partial\rangle$. We say that that $L$ is of
  degree (smaller than) $d$ if there exist polynomials
  $D,p_0,\dots,p_{\ord(L)}\in\Fbb_q[t]$ of degree smaller than $d$ such
  that
  $L=D^{-1}(p_{\ord(L)}(t)\partial^{\ord(L)}+\dots+p_1(t)\partial+p_0(t))$.
\end{defi}
\begin{theorem}\label{thm_correctalgo1}
  Algorithm~\ref{algo_nicerepr} is correct.
  Furthermore, if $Q_m$ is a polynomial in $\Fbb_q[t,Y]$ of bidegree
  $d_t,d_Y$, then Algorithm~\ref{algo_nicerepr} terminates in time
  polynomial in $d_t,d_Y$ and $m$, and quasilinear in $p$. It outputs operators $L_*$ of degree
  polynomial in $d_t,d_Y$ and $m$., and the operators $L_1^*,\dots,L_k^*$ in
  the $\mathrm{LCLM}$-decomposition of $L_*$ it outputs have degree
  polynomial in $d_t$ and $d_Y$.\\ 
\end{theorem}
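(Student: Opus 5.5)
We split the argument into correctness and complexity. For correctness, fix $i\in\{1,\dots,m\}$ and an irreducible factor $N$ of $Q_m$, and write $\tilde N(Y^p)=N^p(Y)$ for the corresponding polynomial over $\Fbb_q(t^p)$. The element $a_N$ plays the role of $y_N^{1/p}$, and the equation $f_N^{(p-1)}+f_N^p=a_N^p$ says exactly that $(\partial-f_N)^p=\partial^p-a_N^p$, by the lemma preceding Theorem~\ref{thm_isoextmorita}. So $f_N$ is a valid choice for Proposition~\ref{prop_buildrepr}. The substitution $\partial\mapsto\partial+\frac{i}{x}$ is a ring automorphism of $\Fbb_q(t)\langle\partial\rangle$. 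It sends the minimal $\Fbb_q(t)$-multiple $N_{i,N}$ of $(x\partial-xf_N)^{\nu_N(Q_i)}$ to the minimal multiple of $(x\partial-xf_N+i)^{\nu_N(Q_i)}$. Its order is $\nu_N(Q_i)\deg N$, because the Galois conjugates over $\Fbb_q(t)$ of this $K_N$-operator are pairwise coprime: their $p$-curvatures have eigenvalues at distinct conjugates of $a_N^p$, and separability gives distinctness. By Proposition~\ref{prop_buildrepr}(i),(ii), each added operator is therefore indecomposable, with $p$-curvature characteristic polynomial $\tilde N^{\nu_N(Q_i)}$.

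Next we show that the LCLM is direct. Operators attached to distinct $N$ are coprime by Theorem~\ref{thm_firstdecomp}, since their $p$-curvatures have coprime characteristic polynomials. Operators attached to the same $N$ but different $i$ carry distinct shifts $i\le m\le p$. Here Proposition~\ref{prop_buildrepr}(iii) applies, padding with $m_i=0$ for the unused indices, and with a small shift-relabelling if $i=p\equiv 0$ causes trouble. Combining this with Proposition~\ref{prop_lclm_directsum}, the algorithm outputs an LCLM-decomposition of $L^*$, and
\[\Dcal_{L^*}\simeq\bigoplus_{i,N}\Dcal_{N_{i,N}(\partial+i/x)}.\]
Each summand has a cyclic $p$-curvature with minimal polynomial $\tilde N^{\nu_N(Q_i)}$. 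For fixed $i$, the summands over the various $N$ have coprime minimal polynomials, so their sum has cyclic $p$-curvature with characteristic polynomial $P_i$. The Frobenius invariants of $\psi^{L^*}_p$ are therefore $P_1|\dots|P_m$, using the divisibility $Q_1|\dots|Q_m$ and uniqueness of the Frobenius normal form.

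For complexity and size, we take as given the bounds of~\cite[Algorithm~9]{PagPhD}. These say that $f_N$ has size polynomial in $d_t,d_Y$ and independent of $p$, and is computed in time quasilinear in $p$. We then bound the minimal multiple $N_{i,N}$. It equals $\prod_\sigma$ applied to Galois conjugates, or it can be obtained as a norm or resultant-type computation over $\Fbb_q(t)[Y]/N_*(Y)$. Its order is $\le d_Y\nu$ with $\nu\le d_Y$, and its coefficients have degree polynomial in $d_t,d_Y$, bounded by linear algebra (Cramer bounds) on the $\Fbb_q(t)$-linear system that expresses a multiple of order $\nu\deg N$. The shift by $i/x$ multiplies degrees by at most the order, which keeps each $L^*_j$ polynomial in $d_t,d_Y$.

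Finally, \cite{BCSZ12} gives an LCLM whose degree is polynomial in the number of operators, at most $m\cdot d_Y$, and in their degrees and orders; this yields the bound on $L^*$. The main obstacle is the step bounding the degree of the minimal multiple $N_{i,N}$ independently of $p$. We would handle it through the explicit linear system over $\Fbb_q(t)$, obtained by descending the $K_N$-coefficients along the basis $1,a_N,\dots$, followed by a determinant degree bound.
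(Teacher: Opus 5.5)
Your proposal is correct and follows the paper's route. Correctness comes from Theorem~\ref{thm_firstdecomp} and Proposition~\ref{prop_buildrepr}, and you usefully make explicit two things the paper leaves implicit: the order $\nu_N(Q_i)\deg N$ of the minimal multiple, and the resulting Frobenius invariants. The size bounds come, as in the paper, from the $p$-independent size of $f_N$ from \cite{PagPhD}, a Cramer-type bound on the $\Fbb_q(t)$-linear system defining the minimal multiple, and \cite{BCSZ12} for the $\mathrm{LCLM}$. One small slip: the minimal $\Fbb_q(t)$-multiple is the $\mathrm{LCLM}$ of the Galois conjugates, not their product, but your final argument does not rely on that remark.
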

\begin{proof}
  The correctness of Algorithm~\ref{algo_nicerepr} is a direct consequence of
  Theorem~\ref{thm_firstdecomp} and Proposition~\ref{prop_buildrepr}.
  Hypothesis~\ref{hypo_reducible} ensures that
  Proposition~\ref{prop_buildrepr} applies for each $N$ in
  \emph{list\_factor}.\\
  
  Each $f_N$ in step~(3a) can be represented by as
  $$f_N=D_{f_N}^{-1}(f_{N,\deg_y(N)-1}a_N^{\deg_y(N)-1}+\dots+f_{N,0})$$ for
  some polynomials $D_{f_N},f_{N,0},\dots,f_{N,\deg_y(N)-1}$ of degrees
  less than some $d'$,
  polynomial in 
  $\deg_t(N)\leq d_t,\deg_Y(N)\leq d_Y$ (see~\cite[Theorem~3.4.29]{PagPhD}).
  Then for every integer $n\in\mathbb{N}$ $(\partial-f_N)^n$ has coefficients in
  $\Fbb_q(t)[a_N]$ which can also be represented by $\deg_Y(N)$ polynomials
  in $\Fbb_q[t]$ and a common denominator of degree smaller than a polynomial in
  $\deg_t(N),\deg_Y(N)$ and
  $n$ (in fact, since usually $\deg_t(N)\deg_Y(N)=O(d')$, this polynomial is usually $O(nd')$).\\
  A multiple of $(\partial-f_N)^n$ in
  $\Fbb_q(t)\langle\partial\rangle$ is a generator of the kernel of the
  natural
  $\Fbb_q(t)$-linear map $$\pi_N:\Fbb_q(t)\langle\partial\rangle_{\leq
  n\deg_{Y}(N)}\rightarrow
  \nicefrac{K_N\langle\partial\rangle}{K_N\langle\partial\rangle(\partial-f_N)^n}.$$
  The matrix $M$ of $\pi_N$ in the $\Fbb_q(t)$-bases
  $(1,\partial,\dots,\partial^{n\deg_Y(N)})$ and
  $(1,a,\dots,a^{\deg_Y(N)-1},\partial,a\partial,\dots,a^{\deg_Y(N)-1}\partial^{n-1})$
  is given by $$[v,\theta(v),\dots,\theta^{n\deg_Y(N)}(v)]$$ where
  $v=\,^t(1,0,\dots,0)$ and
  $\theta=\frac{\mathrm{d}}{\mathrm{d}t}+T$ with $T$ being the matrix $(A_N'D_a)\otimes I_{n}+C((\partial-f_N)^n)$
  where $A_N'$ is the matrix of the multiplication by $a_N'$ in
  $\Fbb_q(t)[a]$, $D_a$ is the matrix of $\frac{\ud}{\ud a}$ and
  \[C((\partial-f_N)^n)=\begin{pmatrix}
    &&&-F_0\\
    I_{\deg_Y(N)}&&&-F_1\\
    &\ddots&&\vdots\\
    &&I_{\deg_Y(N)}&-F_{n-1}
  \end{pmatrix}\]
  where $(\partial-f)^n=\partial^n+f_{n-1}\partial^{n-1}+\dots+f_0$ and
  $F_i$ is the matrix of the multiplication by $f_i$ in $\Fbb_q(t)[a]$.
  Each entry of $T$ has size polynomial in $n,\deg_t(N)$ and $\deg_Y(N)$ therefore
  it is also the case of an element in the kernel of $M$.\\
  Since $n\deg_Y(N)\leq d_Y$, it follows that each $L_i^*$ is of degree
  polynomial in $d_t$ and $d_Y$. The fact
  that $L^*$ has coefficients of degree polynomial in $d_t,d_Y$ and $m$ can be seen as a consequence
  of~\cite{BCSZ12} and of the fact that $\ord L^*=\sum_{i=1}^k\ord
  L_i^*\leq md_Y$
\end{proof}
\begin{remark}
  We see in the proof of the theorem that the problem of finding a multiple
  of $(\partial-f_N)^n$ in $\Fbb_q(t)\langle\partial\rangle$ is in fact a special instance
  of~\cite[Problem~1]{Gai25} although here in positive characteristic.
  Tight bounds could probably be derived from the same type of analysis.
\end{remark}

\section{Isomorphism and $\mathrm{LCLM}$-decomposition propagation}\label{sec_algo}
In this section we seek to find an isomorphism between the quotient modules
of two equivalent operators. We recall that a
morphism of $\Fbb_q(t)\langle\partial\rangle$-modules
$\varphi:\Dcal_{L_1}\rightarrow\Dcal_{L_2}$, with
$L_1,L_2\in\Fbb_q(t)\langle\partial\rangle$, is entirely determined by
$\varphi(1)$ which must verify, for $\varphi$ to be well-defined, that
there exists $Q\in\Fbb_q(t)\langle\partial\rangle$ such that $L_1
\varphi(1)=QL_2$. This makes finding a morphism between $\Dcal_{L_1}$ and
$\Dcal_{L_2}$ a textbook example of finding rational solutions of a mixed
differential equation studied by Mark van Hoeij in~\cite{Hoeij96}.
Unfortunately, their work is limited to the case of characteristic $0$. The
usual method of solving this problem in characteristic $0$ is to transform
it as finding rational solutions of a certain linear differential system
$Y'=AY$ with $A$ a matrix with rational coefficients. We can then bound the
poles of rational solutions and their multiplicities according to the
coefficients of $A$. Unfortunately, this is not possible in characteristic
$p$ as the field of constants is $\Fbb_q(t^p)$ which allow the solutions to
have arbitrarily many poles of arbitrarily high valuation. It is possible
that assuming $p$ to be very high compared to the parameters of $A$ could
allow us to use the same methods as in characteristic $0$ and seize
``small'' isomorphisms between $\Dcal_{L_1}$ and $\Dcal_{L_2}$ (that is to
say given by an operator with coefficients of degree independent from $p$).
However, $L_1$ and $L_2$ being equivalent does not guarantee the existence
of a ``small'' isomorphism. Furthermore, while the existence of a ``small''
isomorphism would, together with the work of the previous sections,
guarantee the existence of ``small'' $\mathrm{LCLM}$-decomposition (for a
similar notion of ``smallness''), the converse is not true.
\begin{example}
  If $p\geq 3$, the operators $\partial$ and $\partial+\frac{p-1}{2t}$ are
  irreducible and equivalent and the degree of their coefficients are
  independent from $p$. However, an isomorphism
  $\varphi:\Dcal_{\partial}\rightarrow\Dcal_{\partial+(p-1)/2t}$ is given
  by $\varphi(1)=g(t^p) t^{(p-1)/2}$ for any $g\in\Fbb_q(t)$. Regardless of
  the choice of $g$, the degree of the coefficients of the isomorphism is
  at least linear in $p$.
\end{example}
Instead, we solve the mixed differential equations by considering it as a
$\Fbb_q(t^p)$-linear system. This induces a quadratic dependency on $p$ 
on the ``size'' of the $\mathrm{LCLM}$-decompositions we will obtain,
though experiments suggest that this dependency is in fact only linear in
$p$.
\begin{lemma}\label{lm_decpropag}
  Let $L,L^*$ be equivalent operators in $\Fbb_q(t)\langle\partial\rangle$,
  and $L^*_1,\dots,L^*_k$ be an $\mathrm{LCLM}$ decomposition of $L^*$. For any
  isomorphism
  $\varphi:\Dcal_{L^*}\xrightarrow{\sim}\Dcal_L$,
  $L=\mathrm{LCLM}_{i=1}^{k}\mathrm{GCRD}(L,\varphi(L^*_i))$ is an
  $\mathrm{LCLM}$ decomposition of $L$.
\end{lemma}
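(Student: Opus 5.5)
We need to interpret $\varphi(L^*_i)$: the map $\varphi$ is determined by $\varphi(1)=A$, with $L^*A\in\Fbb_q(t)\langle\partial\rangle L$. For an operator $R$, $\varphi(R)$ denotes $RA$, or its image $RA\bmod L$. The plan is to identify the submodules of $\Dcal_L$ corresponding to the summands of $\Dcal_{L^*}$, and then apply Proposition~\ref{prop_lclm_directsum}.

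\textbf{Step 1: describe the summands of $\Dcal_{L^*}$.} By Proposition~\ref{prop_lclm_directsum}, the canonical map $\Dcal_{L^*}\to\bigoplus_i\Dcal_{L^*_i}$ is an isomorphism. Let $W_i$ be the kernel of the projection $\Dcal_{L^*}\to\bigoplus_{j\neq i}\Dcal_{L^*_j}$; this is the submodule of classes $M$ with $L^*_j\mid M$ for all $j\neq i$. Then $\Dcal_{L^*}=\bigoplus_i W_i$ and $W_i\simeq\Dcal_{L^*_i}$. The more convenient description uses the quotient side: the canonical map $\Dcal_{L^*}\to\Dcal_{L^*_i}$ has kernel $\Fbb_q(t)\langle\partial\rangle L^*_i/\Fbb_q(t)\langle\partial\rangle L^*=:U_i$, and $\bigcap_i U_i=0$ because the canonical map is injective.

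\textbf{Step 2: transport through $\varphi$.} Since $\varphi$ is an isomorphism, $\varphi(U_i)=\Fbb_q(t)\langle\partial\rangle L^*_iA$ modulo $L$. This is the left submodule of $\Dcal_L$ generated by the class of $L^*_iA$. Its preimage in $\Fbb_q(t)\langle\partial\rangle$ is $\Fbb_q(t)\langle\partial\rangle L^*_iA+\Fbb_q(t)\langle\partial\rangle L=\Fbb_q(t)\langle\partial\rangle G_i$, where $G_i:=\mathrm{GCRD}(L,L^*_iA)$. Hence $\varphi(U_i)=\Fbb_q(t)\langle\partial\rangle G_i/\Fbb_q(t)\langle\partial\rangle L$, and $\varphi$ induces isomorphisms
\[\Dcal_{L^*_i}\simeq\Dcal_{L^*}/U_i\simeq\Dcal_L/\varphi(U_i)\simeq\Dcal_{G_i}.\]
In particular $\ord G_i=\ord L^*_i$, and $G_i$ is indecomposable because it is equivalent to $L^*_i$ and indecomposability is a module property.

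\textbf{Step 3: conclude.} The canonical map $\Dcal_L\to\bigoplus_i\Dcal_{G_i}$ is conjugated by $\varphi$ to the canonical isomorphism $\Dcal_{L^*}\to\bigoplus_i\Dcal_{L^*_i}$. Indeed, for each $i$ the square formed by $\Dcal_{L^*}\to\Dcal_{L^*}/U_i$ and $\Dcal_L\to\Dcal_L/\varphi(U_i)$ commutes by construction. So the canonical map is an isomorphism, and Proposition~\ref{prop_lclm_directsum} gives $L=\mathrm{LCLM}_i G_i$ with $\ord L=\sum_i\ord G_i$.

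The only delicate point is Step 2: the preimage of $\varphi(U_i)$ is exactly the left ideal generated by $L$ and $L^*_iA$. This follows directly from the definition of GCRD together with the surjectivity of $\varphi$. The rest is bookkeeping.
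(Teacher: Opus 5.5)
Your proof is correct and follows essentially the paper's argument: both identify $\varphi(\Dcal_{L^*}L_i^*)$ with $\Dcal_L\,\mathrm{GCRD}(L,\varphi(L_i^*))$. The paper then concludes from $\bigcap_i\varphi(\Dcal_{L^*}L^*_i)=0$ together with a codimension count, whereas you conclude via Proposition~\ref{prop_lclm_directsum}; your isomorphism $\Dcal_{G_i}\simeq\Dcal_{L^*_i}$ also gives the indecomposability of each $G_i$, which the paper's proof leaves implicit.
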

\begin{proof}
  Since $\varphi$ is an isomorphism of
  $\Fbb_q(t)\langle\partial\rangle$-modules, it is in particular an
  isomorphism of $\Fbb_q(t)$-vector spaces. In particular it preserves the
  codimension of submodules. It maps the
  submodule $\Dcal_{L^*}L_i^*$ of $\Dcal_{L^*}$ generated by $L_i^*$ to
  the submodule of $\Dcal_L$ generated by $\varphi(L_i^*)$ which by
  definition of the $\mathrm{GCRD}$ is
  $\Dcal_{L}\mathrm{GCRD}(L,\varphi(L_i^*))$. Furthermore we have
  $\varphi(\bigcap_{i=1}^k\Dcal_{L^*}L_i^*)=\bigcap_{i=1}^k\Dcal_{L}\mathrm{GCRD}(L,\varphi(L_i^*))=\{0\}$.
  Thus,
  $L=\mathrm{LCLM}_{i=1}^k\mathrm{GCRD}(L,\varphi_i(L_i^*))$, by definition
  of the $\mathrm{LCLM}$.
  Finally we have that 
  \begin{align*}
    &\sum_{i=1}^k\ord\mathrm{GCRD}(L,\varphi(L_i^*))=\sum_{i=1}^{k}\dim_{\Fbb_q(t)}\Dcal_{\mathrm{GCRD}(L,\varphi(L_i^*))}\\
    =&\sum_{i=1}^k\mathrm{codim}_{\Fbb_q(t)}\Dcal_{L}\mathrm{GCRD}(L,\varphi(L_i^*))
    =\sum_{i=1}^k\mathrm{codim}_{\Fbb_q(t)}\Dcal_{L^*}L_i^*\\
    =&\sum_{i=1}^k\ord L_i^*
    =\ord L^*=\ord L
  \end{align*}
  Thus we do have a $\mathrm{LCLM}$ decomposition of $L$.
\end{proof}
\begin{algo}
  \begin{flushleft}
    \emph{Input:} $L\in\Fbb_q(t)\langle\partial\rangle$ satisfying
    Hypothesis~\ref{hypo_sep}.\\
    \emph{Output:} $L_1,\dots,L_k$ a
    $\mathrm{LCLM}$-decomposition of $L$.
  \end{flushleft}
  \BlankLine
  \begin{enumerate}
    \item Compute $Q_1|Q_2|\dots|Q_m$ such that if
      $P_1|\dots|P_m=\chi_{min}(\psi^L_p)$ the Frobenius invariants of
      $\psi^L_p$, then $Q_i^p(Y)=P_i(Y^p)$ using~\cite{BoCaSc16}
    \item Let \emph{list\_factor} be the list of irreducible factors of
      $Q_m$.
    \item Let \emph{LCLM\_decomp} be an empty list
    \item \textbf{If} $m==p$ \textbf{do}:
      \begin{itemize}
    \item \textbf{For} $N_*$ \textbf{in} \emph{list\_factor}:
      \begin{enumerate}
        \item \textbf{Let} $\nu_{N_*}$ be the valuation associated to $N_*$
          in $\Fbb_q(t)[Y]$.
        \item \textbf{If} $\nu_{N_*}(Q_1)==\nu_{N_*}(Q_m)$ \textbf{do:}
              \begin{enumerate}
                \item \textbf{Let} $\nu=\nu_{N_*}(Q_m)$
                \item \textbf{Replace} $L$ with $L\cdot
                  N_*^{-p\nu}(\partial)$
                \item \textbf{Replace} $Q_i$ with $Q_i\cdot N_*^{-\nu}$ for all $i$.
                \item \textbf{If} $N_*^p(\partial)$ is irreducible
              (\cite[Algorithm~6]{PagPhD}) \textbf{add} $N_*^{p\nu}(\partial)$ to
                  \emph{LCLM\_decomp}\\
                 \textbf{Else}:
                  \begin{enumerate}
                    \item Use Algorithm~\ref{algo_nicerepr} on
                      $\underbrace{N_*^{\nu}|\dots|N_*^\nu}_{p\text{
                        times}}$. Let $L^*$ be the result,
                      $L_1^*,\dots,L_p^*$ be its
                      $\mathrm{LCLM}$-decomposition.
                    \item \textbf{Add} $L_1^*,\dots,L_p^*$ to
                      \emph{LCLM\_decomp}
                  \end{enumerate}
                \item \textbf{Remove} $N_*$ from \emph{list\_factor} 
              \end{enumerate}
      \end{enumerate}
      \end{itemize}
    \item Use Algorithm~\ref{algo_nicerepr} on $Q_1|\dots|Q_m$. Let $L^*$
      be the output and $L_1^*,\dots,L_k^*$ be its
      $\mathrm{LCLM}$-decomposition.
    \item Compute the kernel of the $\Fbb_q(t^p)$-linear map
      \[\begin{array}{crcl}
        \Lambda_{L^*,L}:&\Fbb_q(t)\langle\partial\rangle_{< \ord
        L}&\rightarrow&\Dcal_L\\
        &M&\mapsto&L^*M\mod L
      \end{array}\]
    \item \textbf{Choose} $M$ a random element of $\ker \Lambda_{L^*,L}$.
    \item \textbf{While} $\mathrm{GCRD}(M,L)\neq 1$ \textbf{choose} $M$ to be another randomly chosen element of $\ker
          \Lambda_{L^*,L}$
    \item \textbf{For} $i$ \textbf{in} $\{1,\dots,k\}$ \textbf{add} $\mathrm{GCRD}(L,L^*_i M)$ to
          \emph{LCLM\_decomp}
    \item \textbf{Return} \emph{LCLM\_decomp}.
  \end{enumerate}
  \caption{LCLM\_dec}
  \label{algo_LCLMdec}
\end{algo}
We now present our $\mathrm{LCLM}$-decomposition algorithm in
Algorithm~\ref{algo_LCLMdec}. Before proving that it gives the correct
result we prove the following lemma:
\begin{lemma}\label{lem_centralelim}
  Let $L\in\Fbb_q(t)\langle\partial\rangle$ satisfy
  Hypothesis~\ref{hypo_sep}. Let $P_1|\dots|P_m$ be
  the Frobenius invariants of $\psi^{L}_p$ in $\Fbb_q(t^p)[Y]$. Let
  $N\in\Fbb_q(t^p)[Y]$ be separable and irreducible (over $\Fbb_q(t^p)$), and $\nu\in\mathbb{N}$ be maximal 
  such that $N(\partial^p)^\nu|L$. We denote by $\nu_N$ the valuation associated
      to $N$ over $\Fbb_q(t^p)[Y]$.
  \begin{enumerate}[label=\roman*)]
    \item $m\leq p$.
    \item  If $\nu>0$ then $m=p$ and $\nu_N(P_1)=\nu$.
    \item If $\nu_N(P_p)=\nu$ then $L=\mathrm{LCLM}(L\cdot
      N(\partial^p)^{-\nu},N(\partial^p)^\nu)$
  \end{enumerate}
\end{lemma}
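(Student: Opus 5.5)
The plan is to reduce everything to the primary component of $\Dcal_L$ attached to $N$, using Theorem~\ref{thm_firstdecomp}, and then to work in the Morita picture of Theorem~\ref{thm_decompclass}.

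For (i), apply Theorem~\ref{thm_firstdecomp} to split $L$ into its $N_i$-primary parts $L_i$. The number of Frobenius invariants of $\psi^L_p$ is the maximum, over the primary components, of the number of invariants of $\psi^{L_i}_p$. So it suffices to bound this number for a divisor of $N(\partial^p)^\nu$.

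If $N(\partial^p)$ is irreducible, Theorem~\ref{thm_decompclass} gives $L_i=N(\partial^p)^{m}$. Then $\Dcal_{L_i}\simeq \Dcal_{N(\partial^p)^m}\simeq \Dcal_{N(\partial^p)}[T]/T^m$ by Theorem~\ref{thm_isoextmorita}. Since $\Dcal_{N(\partial^p)}$ has dimension $p^2$ over $C_N$ and $\psi_p$ acts as $y_N+T$ up to the isomorphism, this is a free $C_N[T]/T^m$-module of rank $p^2$. Viewed as a module over $\Fbb_q(t)\otimes C_N$, or over $\Fbb_q(t)[\psi_p]$, it is $p$ copies of a cyclic module. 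Hence there are exactly $p$ invariants, all equal to $N^m$.

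In the reducible case, Theorem~\ref{thm_decompclass} says each indecomposable summand corresponds to $R^p\otimes_R C_N[T]/T^{m_i}$. I will count the invariants directly: $\Dcal_L$ is a $C_N[T]/T^\nu$-module of the form $\bigoplus_i (C_N[T]/T^{m_i})^p$ over $C_N$. As a $\Fbb_q(t)[\psi]$-module, however, each $R^p\otimes C_N[T]/T^{m_i}$ is cyclic by the dimension count in that proof. So $m$ is the number of indecomposable summands, and this number is not obviously at most $p$.

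\textbf{This is the main obstacle.} My first attempt would use that the kernel of $N(\psi^L_p)$ has $\Fbb_q(t)$-dimension at most $p\deg N$. The reason is that $\ker N(\psi_p)$ corresponds to the submodule annihilated by $N(\partial^p)$, whose $\Fbb_q(t)$-dimension equals the order of $\mathrm{GCRD}(L,N(\partial^p))$, hence at most $\ord N(\partial^p)=p\deg N$. The number of Frobenius invariants divisible by $N$ is $\dim_{\Fbb_q(t)}\ker N(\psi_p)/\deg N\le p$. This argument works in all cases and is the cleanest route to (i).

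For (ii), assume $N(\partial^p)^\nu\mid L$ with $\nu>0$. Then $\Dcal_{N(\partial^p)^\nu}$ is a quotient of $\Dcal_L$, and for $j\le\nu$ the GCRD of $L$ with $N(\partial^p)^j$ is $N(\partial^p)^j$ itself. Therefore $\dim\ker N(\psi^L_p)^j=jp\deg N$ for $j\le\nu$. The invariant count from (i) then shows that there are exactly $p$ invariants divisible by $N$, so $m=p$, and that $\nu_N(P_1)\ge\nu$, because each of the $p$ Jordan-type blocks has length at least $\nu$.

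For equality, suppose $\nu_N(P_1)\ge\nu+1$. Then every block has length at least $\nu+1$, so $\dim\ker N(\psi_p)^{\nu+1}=(\nu+1)p\deg N$. This forces the GCRD of $L$ and $N(\partial^p)^{\nu+1}$ to have order $(\nu+1)p\deg N$, which means it equals $N(\partial^p)^{\nu+1}$ up to a unit. That contradicts the maximality of $\nu$.

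For (iii), assume $\nu_N(P_p)=\nu$. By (ii), $\nu_N(P_1)=\nu$, so all $p$ invariants have $N$-valuation exactly $\nu$ and the $N$-primary part of $\psi^L_p$ is killed by $N^\nu$. By Theorem~\ref{thm_firstdecomp}, the $N$-primary component $L_N=\mathrm{GCRD}(L,N(\partial^p)^{\nu'})$, with $\nu'$ large, then has order $p\nu\deg N$. It is divisible by $N(\partial^p)^\nu$, since $N(\partial^p)^\nu$ divides both $L$ and $N(\partial^p)^{\nu'}$. Having the same order, it equals $N(\partial^p)^\nu$.

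The complementary factor $L'=\mathrm{LCLM}$ of the other primary components is coprime to $N(\partial^p)^\nu$, since their $p$-curvature characteristic polynomials are coprime. Theorem~\ref{thm_firstdecomp} gives $L=\mathrm{LCLM}(L',N(\partial^p)^\nu)$. Because $N(\partial^p)^\nu$ is central, $L=L'N(\partial^p)^\nu=N(\partial^p)^\nu L'$ in the appropriate sense, by Proposition~\ref{short_exact_sequence} applied to coprime operators whose orders add up. This identifies $L'$ with $L\cdot N(\partial^p)^{-\nu}$.
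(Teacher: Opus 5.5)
Your proof is correct, but for (i), for the equality in (ii), and for (iii) it takes a different route from the paper. The first half of (ii) ($m=p$ and $\nu_N(P_i)\ge\nu$) is essentially the paper's own dimension count.

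Your key tool is the identity $\dim_{\Fbb_q(t)}\ker N(\psi^L_p)^j=\ord\mathrm{GCRD}(L,N(\partial^p)^j)$. It holds because $N(\partial^p)^j$ is central, so $N(\psi^L_p)^j$ is multiplication by it, and its image is the submodule of $\Dcal_L$ generated by that $\mathrm{GCRD}$. You combine it with the rational-canonical-form count $\dim_{\Fbb_q(t)}\ker N(\psi^L_p)^j=\deg N\sum_i\min(j,\nu_N(P_i))$. This count is valid over $\Fbb_q(t)$ because $\gcd(N^j,P_i)$ does not depend on the base field, a point you should state. Together they give (i) at once with $j=1$, and the equality $\nu_N(P_1)=\nu$ directly with $j=\nu+1$.

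The paper argues differently. For (i) it extends scalars until the relevant central simple algebras split and applies Theorem~\ref{thm_decompclass}. It then takes the unique irreducible right factor of each indecomposable summand attached to an irreducible factor $N'$ of $P_1$, and bounds the order of their $\mathrm{LCLM}$ by $\ord N'(\partial^p)$. For the equality in (ii) it passes to $L'=L\cdot N(\partial^p)^{-\nu}$, identifies its invariants as $P_iN^{-\nu}$, and reruns the argument of (i).

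Your route is more elementary: no base change, no Wedderburn or Morita, and your argument for (i) never uses Hypothesis~\ref{hypo_sep}. It also handles (i) and (ii) uniformly. The paper's route ties the bound to the decomposition structure used in the rest of the paper.

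For (iii) the paper is more direct. Multiplicativity of $\chi$ shows at once that $L\cdot N(\partial^p)^{-\nu}$ is coprime to $N(\partial^p)^\nu$. You instead pass through the primary decomposition of Theorem~\ref{thm_firstdecomp}. You must then identify the complementary factor with $L\cdot N(\partial^p)^{-\nu}$ by an order count, and only up to a unit of $\Fbb_q(t)$; that is what ``in the appropriate sense'' should say.

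Two small points remain. Your (iii) invokes (ii), which needs $\nu>0$, so dispose of $\nu=0$ separately; it is trivial. The abandoned Morita-based attempt at (i) can simply be dropped.
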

\begin{proof}
  Since the Frobenius invariants of $\psi^L_p$ do not depend on the base
  field and any central simple algebra of the form $\Dcal_{N'(\partial^p)}$ must split over some finite
  dimensional Galois extension $K$ of $\Fbb_q(t)$,
  we may assume without loss of generality that $L$ has no central
  irreducible divisor (in particular we can assume $N$ to be to be irreducible as a
  polynomial and $N(\partial^p)$ to be reducible as an operator). We denote
  $C$ the constant field of $K$.
  Let $N'\in C[Y]$ be an irreducible factor of $P_1$. Then $N'$
  must appear in each of the $P_i$. Since we assumed $N'(\partial^p)$ to be
  reducible (otherwise it would be a central irreducible divisor of $L$), this means according to 
  Theorem~\ref{thm_decompclass} that a
  $\mathrm{LCLM}$-decomposition of $L$ must contain exactly $m$
  indecomposable factors which are divisors of a power of $N'(\partial^p)$.
  In particular, if $L_1,\dots,L_m$ are their unique right irreducible
  factors, then each $L_i$ is a divisor of $N'(\partial^p)$, thus
  $\mathrm{LCLM}_{i=1}^m L_i|N'(\partial^p)$, and
  $\ord \mathrm{LCLM}_{i=1}^m L_i=\sum_{i=1}^m\ord L_i=m\deg(N')\leq\ord
  N'(\partial^p)=p\deg(N')$ which proves $m\leq p$.\\
  The $K\langle\partial\rangle$-module $\Dcal_{N(\partial^p)^\nu}$
  can be identified as the submodule of $\Dcal_L$, $\ker N^\nu(\psi^L_p)$.
  $(\Dcal_{L},\psi^{L}_p)$ is isomorphic to
  $\bigoplus_{i=1}^m K[T]/P_i(T)$ as a $K[T]$-module. Thus
  $\ker
  N(\psi^L_p)\simeq \bigoplus_{i=1}^m K[T]/N^{\min(\nu,\nu_N(P_i))}(T)$ as
  $K[T]$-modules
  and \small$$\dim_{K}\ker N(\psi^L_p)=\sum_{i=1}^m
  \min(\nu,\nu_N(P_i))\deg(N)=p\nu\deg(N)$$\normalsize. This can only happen if
  $\nu=0$ or $m=p$ and $\nu(P_i)\geq \nu$ pour tout $i$. In particular
  $\nu_N(P_1)\geq \nu$. To prove that we have the equality we can write
  $L=L'N(\partial^p)^\nu$. Then $(\Dcal_{L'},\psi^{L'}_p)$ can be
  identified as the quotient of $\Dcal_L$ by $\ker N^\nu(\psi^L_p)$ and is
  thus isomorphic as a $K[T]$-module to $\bigoplus_{i=1}^p K[T]/P_i'$ where
  $P_i'=P_i\cdot N^{-\nu}$. In particular the $P_i'$ are the Frobenius
  invariants of $\psi^{L'}_p$. If $\nu_N(P_1')>0$ then we
  can show the same way that we proved that $m\leq p$ that
  $N(\partial^p)|L'$ which is a contradiction with the maximality of
  $\nu$. To prove (iii) we may assume that $\nu>0$. Since $N(\partial^p)$
  is central, $L$ is indeed a left multiple of $L'=L\cdot
  N(\partial^p)^{-\nu}$ and $N(\partial^p)^{\nu}$. Furthermore by hypothesis,
  $\nu_N(\chi(\psi^L_p))=p\nu$. It follows by multiplicativity that
  $\nu_N(\chi(\psi^{L'}_p))=0$, thus $L'$ and $N(\partial^p)^\nu$ are
  coprime. By Proposition~\ref{prop_lclm_directsum}(i) it follows that
  $L=\mathrm{LCLM}(L',N(\partial^p)^\nu)$.
\end{proof}
\begin{theorem}
  Algorithm~\ref{algo_LCLMdec} is correct. Furthermore if
  $L\in\Fbb_q(t)\langle\partial\rangle$ is of order $r$ and degree $d$
  (see Definition~\ref{def_degree}), then Algorithm~\ref{algo_LCLMdec}
  terminates in time polynomial in $r,d$ and $p$ and yields a
  $\mathrm{LCLM}$-decomposition of $L$ where each factor has degree
  polynomial in $r$ and $d$ and at most quadratic in $p$.
\end{theorem}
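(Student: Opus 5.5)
We split the statement into correctness and complexity, and treat correctness in two stages that mirror steps (4) and (5)--(10) of Algorithm~\ref{algo_LCLMdec}.

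\emph{Stage 1: central factors removed in step (4).} If $m<p$, Lemma~\ref{lem_centralelim}(ii) shows that no $N(\partial^p)$ with $N$ separable irreducible divides $L$, so step (4) is skipped. If $m=p$, take a factor $N_*$ with $\nu_{N_*}(Q_1)=\nu_{N_*}(Q_m)=\nu$, and let $N$ be the corresponding factor with $N(Y^p)=N_*^p(Y)$. By Lemma~\ref{lem_centralelim}(ii)--(iii), $N(\partial^p)^\nu$ divides $L$ and $L=\mathrm{LCLM}(L N(\partial^p)^{-\nu},N(\partial^p)^\nu)$, with the two operators coprime. Then Proposition~\ref{prop_lclm_directsum} gives $\Dcal_L\simeq\Dcal_{L'}\oplus\Dcal_{N(\partial^p)^\nu}$, where $L'=LN(\partial^p)^{-\nu}$. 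The Frobenius invariants of $L'$ are the $P_i N^{-\nu}$, which is exactly the update of the $Q_i$.

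It remains to decompose $N(\partial^p)^\nu$ itself. If $N(\partial^p)$ is irreducible, $N(\partial^p)^\nu$ is indecomposable by Theorem~\ref{thm_decompclass}. Otherwise, the $p$-curvature of $N(\partial^p)^\nu$ has $p$ invariants all equal to $N^\nu$. Algorithm~\ref{algo_nicerepr} then returns an $L^*$ equivalent to $N(\partial^p)^\nu$ (Corollary~\ref{cor_critiso}), together with $p$ indecomposable factors $L_i^*$ of order $\nu\deg N$. Since $N(\partial^p)^\nu$ is central and $\chi(\psi^{L^*}_p)=N^{p\nu}$, its minimal polynomial divides $N^\nu$. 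Hence $N(\partial^p)^\nu$ annihilates $\Dcal_{L^*}$, so $L^*$ divides it, and equality of orders gives $L^*=N(\partial^p)^\nu$. In both cases the factors added are indeed a decomposition of that central part.

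\emph{Stage 2: the remaining operator, steps (5)--(9).} Algorithm~\ref{algo_nicerepr} applies because the surviving factors are exactly those to which Hypothesis~\ref{hypo_reducible} can be applied. Irreducible central factors have been removed: their invariants must all be equal, by the argument in the proof of Theorem~\ref{thm_decompclass}. So $L^*$ has the same Frobenius invariants as $L$, and Corollary~\ref{cor_critiso} makes $L^*$ equivalent to $L$. An element $M\in\ker\Lambda_{L^*,L}$ is the same thing as a morphism $\varphi:\Dcal_{L^*}\to\Dcal_L$ with $\varphi(1)=M$. Such a $\varphi$ is surjective exactly when $\mathrm{GCRD}(M,L)=1$, and then bijective because both modules have dimension $\ord L$.

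We then need the loop in step (8) to terminate. The set of isomorphisms is nonempty and is the complement of a proper Zariski-closed subset, namely the vanishing of a determinant, in a finite-dimensional $\Fbb_q(t^p)$-space. So a random element drawn from a large enough finite set of $\Fbb_q(t^p)$-combinations of a kernel basis works with high probability, and the expected number of iterations is constant. Lemma~\ref{lm_decpropag} then turns the factors $\mathrm{GCRD}(L,L_i^*M)$ into a decomposition of $L$. Concatenating with Stage 1 via Proposition~\ref{prop_lclm_directsum} gives the full output.

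\emph{Complexity.} Step (1) is polynomial by~\cite{BoCaSc16}, and Theorem~\ref{thm_correctalgo1} bounds $L^*$ and the $L_i^*$ polynomially in $r$ and $d$. The key step, and the main obstacle, is step (6). Writing $M=\sum_{j<p} t^j M_j$ with the $M_j$ having coefficients in $\Fbb_q(t^p)$ makes $\Lambda_{L^*,L}$ an $\Fbb_q(t^p)$-linear map of dimension $O(pr)$. Its matrix entries have degree polynomial in $r$ and $d$, and $p$ enters through reducing $t^j$ and $\partial$-powers modulo $L$. Cramer's rule then gives a kernel basis whose entries have degree $O(pr)$ times the entry degree in $t^p$, that is, $O(p^2)$ in $t$ up to polynomial factors in $r$ and $d$. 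Bounding the entry sizes carefully is where I expect the difficulty to lie. The GCRDs in step (9), computed by subresultant-type bounds, keep the degrees polynomial in $r$ and $d$ and quadratic in $p$. Every step runs in time polynomial in these sizes.
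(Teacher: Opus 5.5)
Your proposal is correct and follows essentially the same route as the paper. You strip the central parts with Lemma~\ref{lem_centralelim}, obtain an equivalent representative $L^*$ via Algorithm~\ref{algo_nicerepr} and Corollary~\ref{cor_critiso}, pick an element of $\ker\Lambda_{L^*,L}$ coprime to $L$ and propagate the decomposition with Lemma~\ref{lm_decpropag}, and get the quadratic dependence on $p$ from Cramer's rule on an $rp\times rp$ system over $\Fbb_q(t^p)$ whose entries have $t^p$-degree independent of $p$.

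Your extras are refinements the paper leaves implicit: the Schwartz--Zippel argument for the random loop, the check that the step-(4) representative is $N(\partial^p)^\nu$ itself, and the bounds on the final GCRDs. You should still state the input bounds you feed into Theorem~\ref{thm_correctalgo1}, namely $\sum_i\deg_t Q_i\le d$ (via~\cite{BoCaSc14}), $\sum_i\deg_Y Q_i=r$ and $m\le r$, as the paper does.
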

\begin{remark}
  The quadratic dependency in $p$ comes from step(6) of
  Algorithm~\ref{algo_LCLMdec}. However, experiments suggest that the
  coefficients of elements of the basis of $\ker \Lambda_{L^*,L}$ are of no bigger size
  than the coefficients of the matrix of $\Lambda_{L_*,L}$ itself. More
  accuratly, the growth of the coefficients does not seem to depend on $p$ and
  rather only on $r$ and $d$. This suggests that the output of
  Algorithm~\ref{algo_LCLMdec} is actually of size quasilinear and not
  quasiquadratic
  in $p$.
\end{remark}
\begin{proof}
  At the end of step (4), $L$ can no longer have any central irreducible
  divisor and the $Q_i$ are such that if $P_i(Y^p)=Q_i^p(Y)$ then the $P_i$
  are the Frobenius invariants of its $p$-curvature. This is a direct
  consequence of Lemma~\ref{lem_centralelim}. According to (iii) of the
  same lemma, to compute a $\mathrm{LCLM}$-decomposition of $L$ (in the
  input), it is enough to compute a $\mathrm{LCLM}$-decomposition of $L$ in
  step $4$ and of $N^p(\partial)$ when $\nu_N(Q_1)=\nu_N(Q_m)$ and $m=p$.
  The fact that Algorithm~\ref{algo_nicerepr} computes such a decomposition
  in the latter case when $N^p(\partial)$ is reducible is a consequence of
  Proposition~\ref{prop_buildrepr} since $N^{p\nu}(\partial)$ is a common
  multiple of the right order.\\
  In step (5), Algorithm~\ref{algo_nicerepr} yields an operator $L^*$ equivalent
  to $L$ in step (4). Indeed, by Theorem~\ref{thm_correctalgo1},
  $\psi^{L^*}_p$ and $\psi^{L}_p$ have the same Frobenius invariants and
  are thus equivalent, which means that $L^*$ and $L$ are equivalent by
  Corollary~\ref{cor_critiso}. The elements in the kernel of $\Lambda_{L_*,L}$ are
  all the morphisms of $\Fbb_q(t)\langle\partial\rangle$-modules between
  $\Dcal_{L^*}$ and $\Dcal_L$. Selecting one coprime with $L$ ensures that
  it is an isomorphism and the
  fact that $L^*$ and $L$ are equivalent ensures the existence of such an
  element. We conclude by Lemma~\ref{lm_decpropag}.\\
  Since $\sum_{i=1}^m \deg_t(Q_i)\leq p^{-1}\deg_t(\chi(\psi^L_p))\leq d$
  (see~\cite[Lemma~3.9]{BoCaSc14}) and $\sum_{i=1}^m
  \deg_Y(Q_i)=r$, it follows from Theorem~\ref{thm_correctalgo1} that $L^*$
  and the $L_i^*$ all have degree polynomial in $r$ and $d$. The matrix of
  $\Lambda_{L^*,L}$ in the bases $(x^i\partial^j)_{0\leq i<p,0\leq
  j<r}$ can thus be represented by a matrix in
  $M_{rp}(\Fbb_q(t^p))$ whose coefficients are of degree in $t^p$ polynomial in
  $r$ and $d$. It follows that elements of the kernel have coefficients of
  degree in $t^p$ polynomial in $r$ and $d$ and linear in $p$. Thus $M$ has
  is of degree in $t$ polynomial in $r$ and $d$ and quadratic in $p$.
\end{proof}

\bibliographystyle{acm}
\bibliography{../bibliographie}

\begin{thebibliography}{10}

\bibitem{Anderson1974}
{\sc Anderson, F.~W., and Fuller, K.~R.}
\newblock Rings and categories of modules.

\bibitem{BJN94}
{\sc Bhattacharya, P.~B., Jain, S.~K., and Nagpaul, S.}
\newblock {\em Basic abstract algebra}.
\newblock Cambridge University Press, 1994.

\bibitem{BoCaSc14}
{\sc Bostan, A., Caruso, X., and Schost, E.}
\newblock A fast algorithm for computing the characteristic polynomial of the
  $p$-curvature.
\newblock In {\em I{SSAC} 2014---{P}roceedings of the 39th {I}nternational
  {S}ymposium on {S}ymbolic and {A}lgebraic {C}omputation\/} (2014), ACM, New
  York, pp.~59--66.

\bibitem{BoCaSc16}
{\sc Bostan, A., Caruso, X., and Schost, E.}
\newblock Computation of the similarity class of the {$p$}-curvature.
\newblock In {\em Proceedings of the 2016 {ACM} {I}nternational {S}ymposium on
  {S}ymbolic and {A}lgebraic {C}omputation\/} (2016), ACM, New York,
  pp.~111--118.

\bibitem{BCSZ12}
{\sc Bostan, A., Chyzak, F., Salvy, B., and Li, Z.}
\newblock Fast computation of common left multiples of linear ordinary
  differential operators.
\newblock In {\em Proceedings of the 37th International Symposium on Symbolic
  and Algebraic Computation\/} (July 2012), ISSAC’12, ACM.

\bibitem{ChGoMe22}
{\sc Chyzak, F., Goyer, A., and Mezzarobba, M.}
\newblock Symbolic-numeric factorization of differential operators.
\newblock In {\em Proceedings of the 2022 International Symposium on Symbolic
  and Algebraic Computation\/} (New York, NY, USA, 2022), ISSAC '22,
  Association for Computing Machinery, p.~73–82.

\bibitem{Cluzeau03}
{\sc Cluzeau, T.}
\newblock Factorization of differential systems in characteristic {$p$}.
\newblock In {\em Proceedings of the 2003 {I}nternational {S}ymposium on
  {S}ymbolic and {A}lgebraic {C}omputation\/} (2003), ACM, New York,
  pp.~58--65.

\bibitem{Dw90}
{\sc Dwork, B.}
\newblock Differential operators with nilpotent {$p$}-curvature.
\newblock {\em Amer. J. Math. 112}, 5 (1990), 749--786.

\bibitem{Eid21}
{\sc Eid, E.}
\newblock Fast computation of hyperelliptic curve isogenies in odd
  characteristic.
\newblock In {\em Proceedings of the 2021 on International Symposium on
  Symbolic and Algebraic Computation\/} (New York, NY, USA, 2021), ISSAC '21,
  Association for Computing Machinery, p.~131–138.

\bibitem{Gai25}
{\sc Gaillard, L.}
\newblock A unified approach for degree bound estimates of linear differential
  operators.
\newblock In {\em Proceedings of the 2025 International Symposium on Symbolic
  and Algebraic Computation\/} (New York, NY, USA, 2025), ISSAC '25,
  Association for Computing Machinery, p.~16–24.

\bibitem{GiZh03}
{\sc Giesbrecht, M., and Zhang, Y.}
\newblock Factoring and decomposing {O}re polynomials over {$\Bbb F_q(t)$}.
\newblock In {\em Proceedings of the 2003 {I}nternational {S}ymposium on
  {S}ymbolic and {A}lgebraic {C}omputation\/} (2003), ACM, New York,
  pp.~127--134.

\bibitem{GiSz06}
{\sc Gille, P., and Szamuely, T.}
\newblock {\em Central Simple Algebras and Galois Cohomology}.
\newblock Cambridge Studies in Advanced Mathematics. Cambridge University
  Press, 2006.

\bibitem{GoTLoNa19}
{\sc G\'{o}mez-Torrecillas, J., Lobillo, F.~J., and Navarro, G.}
\newblock Computing the bound of an {O}re polynomial. {A}pplications to
  factorization.
\newblock {\em J. Symbolic Comput. 92\/} (2019), 269--297.

\bibitem{Gri90}
{\sc Grigor'ev, D.}
\newblock Complexity of factoring and calculating the gcd of linear ordinary
  differential operators.
\newblock {\em Journal of Symbolic Computation 10}, 1 (1990), 7--37.

\bibitem{kauers23c}
{\sc Kauers, M.}
\newblock {\em D-Finite Functions}.
\newblock Springer, 2023.

\bibitem{LaVa16}
{\sc Lairez, P., and Vaccon, T.}
\newblock On {$p$}-adic differential equations with separation of variables.
\newblock In {\em Proceedings of the 2016 {ACM} {I}nternational {S}ymposium on
  {S}ymbolic and {A}lgebraic {C}omputation\/} (2016), ACM, New York,
  pp.~319--323.

\bibitem{Lauder04count}
{\sc Lauder, A. G.~B.}
\newblock Counting solutions to equations in many variables over finite fields.
\newblock {\em Found. Comput. Math. 4}, 3 (2004), 221--267.

\bibitem{PagPhD}
{\sc Pag{\`e}s, R.}
\newblock {\em {Factoring differential operators in positive characteristic.}}
\newblock Theses, {Universit{\'e} de Bordeaux}, Feb. 2024.

\bibitem{Pag24}
{\sc Pag{\`e}s, R.}
\newblock {Solving the p-Riccati Equations and Applications to the
  Factorisation of Differential Operators.}
\newblock working paper or preprint, Jan. 2024.

\bibitem{vdH07:galois}
{\sc van~der Hoeven, J.}
\newblock Around the numeric-symbolic computation of differential {Galois}
  groups.
\newblock {\em JSC 42\/} (2007), 236--264.

\bibitem{vdH07:reshol}
{\sc van~der Hoeven, J.}
\newblock Efficient accelero-summation of holonomic functions.
\newblock {\em JSC 42}, 4 (2007), 389--428.

\bibitem{Put95}
{\sc van~der Put, M.}
\newblock Differential equations in characteristic {$p$}.
\newblock vol.~97. 1995, pp.~227--251.
\newblock Special issue in honour of Frans Oort.

\bibitem{Put96}
{\sc van~der Put, M.}
\newblock Reduction modulo {$p$} of differential equations.
\newblock {\em Indag. Math. (N.S.) 7}, 3 (1996), 367--387.

\bibitem{PuSi03}
{\sc van~der Put, M., and Singer, M.~F.}
\newblock {\em Galois theory of linear differential equations}, vol.~328 of
  {\em Grundlehren der Mathematischen Wissenschaften [Fundamental Principles of
  Mathematical Sciences]}.
\newblock Springer-Verlag, Berlin, 2003.

\bibitem{Hoeij96}
{\sc van Hoeij, M.}
\newblock Rational solutions of the mixed differential equation and its
  application to factorization of differential operators.
\newblock In {\em Proceedings of the 1996 International Symposium on Symbolic
  and Algebraic Computation\/} (New York, NY, USA, 1996), ISSAC '96,
  Association for Computing Machinery, p.~219–225.

\bibitem{VHo97}
{\sc {Van Hoeij}, M.}
\newblock Factorization of differential operators with rational functions
  coefficients.
\newblock {\em Journal of Symbolic Computation 24}, 5 (1997), 537--561.

\end{thebibliography}

\end{document}